\theoremstyle{plain}
 \newtheorem{thm}{Theorem}[section]
 \newtheorem{prop}{Proposition}[section]
 \newtheorem{lem}{Lemma}[section]
\theoremstyle{definition}
 \newtheorem{rem}{Remark}[section]
\numberwithin{equation}{section}
\renewcommand{\leq}{\leqslant} 
\renewcommand{\setminus}{\smallsetminus}
\def\cj {\char'017}
\def\ji {\char'032}
\def\ja {\char'037}
\def\m  {\char'176}
\font\srit=wncyi8
 \font\srrm=wncyr8
\newcommand{\R}{\mathbb{R}}
\DeclareMathOperator{\diag}{\mathrm{diag}}
\DeclareMathOperator{\Hol}{\mathrm{Hol}}
\title[the Maupertuis principle]{ON THE PRINCIPLE OF STATIONARY ISOENERGETIC ACTION} \subjclass[2010]{37J05, 37J55, 70H25, 70H30}
\author[Jovanovi\'c]{\bfseries Bo\v zidar Jovanovi\'c}
\address{
Mathematical Institute SANU \\
Serbian Academy of Sciences and Arts \\
Kneza Mihaila 36, 11000 Belgrade\\
Serbia}
\email{bozaj@mi.sanu.ac.rs}
\begin{document}

\begin{abstract}
We present several variants of the Maupertuis principle, both on
the exact and the non exact symplectic manifolds.
\end{abstract}

\maketitle

\centerline{\it Dedicated to the memory of Academician Anton
Bilimovi\' c  (1879-1970)}

\tableofcontents

\section{Introduction}

\subsection{}
The principle of least action, or the principle of stationary
action, says that the trajectories of a mechanical system can be
obtained as extremals of a certain action functional. It is one of
the basic tools in physics being applied both in classical and
quantum setting.

Consider a Lagrangian system $(Q,L)$, where $Q$ is a configuration
space and $L(q,\dot q,t)$ is a Lagrangian, $L: TQ \times \R\to
\R$. Let $q=(q_1,\dots,q_n)$ be local coordinates on $Q$. The
motion of the system is described by the Euler--Lagrange equations
\begin{equation} \label{Lagrange}
\frac{d}{dt}\frac{\partial L}{\partial \dot q_i}=\frac{\partial
L}{\partial q_i}, \quad i=1,\dots,n.
\end{equation}

The solutions of the Euler--Lagrange equations are exactly the
critical points of the action integral
$$
S_L({\gamma})=\int_a^b L(q,\dot q,t)dt
$$ in a class of curves ${\gamma}:
[a,b]\to Q$ with fixed endpoints ${\gamma}(a)=q_0$,
${\gamma}(b)=q_1$ (the {\it Hamiltonian principle of least action}
(1834), e.g., see \cite{Po}).

The Legendre transformation $\mathbb FL: TQ\to T^*Q $ is defined
by
\begin{equation}\label{legendre}
\mathbb FL(q,\xi,t)\cdot \eta=
\frac{d}{ds}\vert_{s=0}L(q,\xi+s\eta,t) \quad \Longleftrightarrow
\quad p_i=\frac{\partial L}{\partial \dot q_i}, \quad i=1,\dots,
n,
\end{equation}
where $\xi,\eta\in T_q Q$ and $(q_1,\dots,q_n, p_1,\dots,p_n)$ are
canonical coordinates of the cotangent bundle $T^*Q$. In order to
have a Hamiltonian description of the dynamics (see the section
below),  we suppose that the Legendre transformation
\eqref{legendre} is a diffeomorphism. The corresponding Lagrangian
$L$ is called {\it hyperregular} \cite{MR}.

If the Lagrangian $L$ does not depend on time then  the equations
\eqref{Lagrange} possess the energy first integral
\begin{equation}\label{Energy}
E(q,\dot q)=\mathbb FL(q,\dot q)\cdot \dot q-L(q,\dot
q)=\sum_{i}\frac{\partial L}{\partial \dot q_i}\dot q_i-L.
\end{equation}

In that case we have

\begin{thm}[the Maupertuis principle] Suppose that $h$ is a regular value of $E$.
Among all curves $q={\gamma}(\tau)$ connecting two points $q_0$
and $q_1$ and parametrized so that the energy has a fixed value
$E=h$, the trajectory of the equations of dynamics
\eqref{Lagrange} is an extremal of the reduced action
\begin{equation}\label{red-action}
S({\gamma})=\int_a^b \mathbb
FL\left(q(\tau),\frac{dq}{d\tau}\right)\cdot \frac{dq}{d\tau}\,
d\tau=\int_a^b \frac{\partial L}{\partial\dot q}(\tau)\cdot
\frac{dq}{d\tau}\, d\tau, \quad q_0={\gamma}(a), \,
q_1={\gamma}(b)
\end{equation}
\end{thm}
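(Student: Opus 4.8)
The plan is to reduce the statement to the ordinary Hamilton principle with \emph{variable} time-endpoints, exploiting the defining relation \eqref{Energy} between the energy and the Legendre transform. First I would observe that along any curve $\gamma$ the integrand of the reduced action \eqref{red-action} can be rewritten via \eqref{Energy} as $\mathbb FL(q,\dot q)\cdot \dot q = E(q,\dot q) + L(q,\dot q)$, so that for a curve parametrized to have constant energy $E\equiv h$ one obtains
$$
S(\gamma) = \int_a^b \big(E + L\big)\, d\tau = h\,(b-a) + S_L(\gamma),
$$
where $S_L(\gamma)=\int_a^b L\,d\tau$ is the Lagrangian action and $T=b-a$ is the total travel time. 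The essential feature of the comparison class is that its members share the fixed spatial endpoints $q_0,q_1$ but are free to have different total times $T$ and different (energy-constrained) parametrizations; hence the variation must be taken with moving parameter limits.

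The main computation is then the first variation of $S_L$ with fixed spatial endpoints but \emph{free} endpoint times. I would take a one-parameter family $q(\tau,\epsilon)$ with moving limits $a(\epsilon),b(\epsilon)$, all members connecting $q_0$ to $q_1$ with energy $h$, and perform the standard integration by parts while keeping the Leibniz boundary terms produced by the moving limits. Writing $\delta q$ for the fixed-$\tau$ variation and $\Delta q=\delta q + \dot q\,\delta t$ for the total displacement of the endpoint, the boundary contributions recombine, again through $\mathbb FL\cdot\dot q - L = E$, into the classical expression
$$
\delta S_L = \int_a^b \Big(\frac{\partial L}{\partial q}-\frac{d}{d\tau}\frac{\partial L}{\partial \dot q}\Big)\cdot\delta q\, d\tau + \Big[\,\mathbb FL\cdot \Delta q - E\,\delta t\,\Big]_a^b .
$$
Because the spatial endpoints are held fixed, $\Delta q=0$ at $\tau=a,b$; because $\gamma$ solves \eqref{Lagrange}, the Euler--Lagrange integrand vanishes; and because every comparison curve carries energy $h$, the surviving boundary term is $[-h\,\delta t]_a^b = -h\,\delta T$. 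Substituting into $\delta S = h\,\delta T + \delta S_L$ yields $\delta S = h\,\delta T - h\,\delta T = 0$, which is precisely the assertion.

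I would close by recording where the hypothesis that $h$ is a regular value of $E$ is used: it guarantees that $E^{-1}(h)\subset TQ$ is a smooth hypersurface and that, near the trajectory, an arbitrary nearby curve in $Q$ admits a smooth reparametrization making $E\equiv h$, so that the constrained comparison family above is genuinely nonempty and depends smoothly on $\epsilon$. The one point demanding care — and the only place I expect real friction — is the bookkeeping of the moving-endpoint boundary terms: one must keep rigorously separate the vertical variation $\delta q$ and the total endpoint displacement $\Delta q$, since it is their difference $\dot q\,\delta t$ that, paired with $\mathbb FL$, produces the energy term cancelling $h\,\delta T$. Once that cancellation is arranged correctly the conclusion is immediate.
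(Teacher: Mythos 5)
Your proof is correct, but it takes a genuinely different route from the paper. You give the classical, self-contained Lagrangian argument: split the integrand via \eqref{Energy} as $\mathbb FL\cdot\dot q=E+L$, so that on the constrained class $S(\gamma)=h\,T+S_L(\gamma)$, and then compute the first variation of $S_L$ with fixed spatial endpoints but moving time limits, where the Leibniz terms recombine into $\bigl[\mathbb FL\cdot\Delta q-E\,\delta t\bigr]_a^b$ and the constraint $E\equiv h$ produces the cancellation $h\,\delta T-h\,\delta T=0$. The paper never performs this moving-endpoint computation: it proves Theorem 1.1 as a corollary of the phase-space version (Theorem 2.2), which in turn follows from the Cartan-formula identity of Lemma 2.1 --- extremals of $A(\gamma)=\int_\gamma\alpha$ among curves lying on the regular level $M=H^{-1}(h)$ are precisely the integral curves of the characteristic line bundle $\ker\omega\vert_M$ --- together with the observation that the Legendre lift $\gamma=\mathbb FL(\underline{\gamma},d\underline{\gamma}/d\tau)$ of an energy-$h$ configuration curve lies on $M$ and satisfies $S(\underline{\gamma})=A(\gamma)$. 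The trade-off is instructive: your argument is elementary and stays on $TQ$, but hinges on the delicate bookkeeping of $\delta q$ versus $\Delta q=\delta q+\dot q\,\delta t$ (which you handle correctly); the paper's argument trades variable time endpoints for free endpoints in the fibers $T^*_{q_0}Q$ and $T^*_{q_1}Q$, which eliminates the moving-limit terms entirely and generalizes verbatim to arbitrary exact symplectic manifolds and, in Section 3, to the non-exact case --- a generalization your route does not directly give, since it presupposes a globally defined Lagrangian. One small caveat in your closing remark: regularity of $h$ as a value of $E$ does give the smooth hypersurface $E^{-1}(h)$, but the reparametrizability of a nearby configuration curve to constant energy $h$ additionally requires that $\lambda\mapsto E(q,\lambda\dot q)$ be strictly monotone near the trajectory (automatic for natural systems, where $E=T+V$ and $T$ is positive definite, but not a formal consequence of $dE\ne 0$ alone, even under hyperregularity); this does not affect your first-variation computation, only the description of the admissible comparison family.
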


It is important to note that the interval $[a,b]$, parametrizing
the curve $q={\gamma}(\tau)$, is not fixed and it can be different
for different curves being compared, while the energy must be the
same.

Contrary to the Hamiltonian principle, the {\it Maupertuis
principle}, or {\it principle of stationary isoeneretic action}
determines the shape of a trajectory but not the time. In order to
determine the time, we have to use the energy constant.

Historically, a variant of Theorem 1.2 was the first variational
approach to mechanics. It is attributed to Maupertuis (1744),
Euler (1744) and Jacobi (1842), who gave an important geometric
interpretation of the principle (see \cite{Po}).

\subsection{}

The classical proofs of the Maupertuis principle can be found in
\cite{Po, Wh, Ar}. In Serbian, see the second volume of
Bilimovi{\' c}'s course in Theoretical mechanics \cite{Bi}, or
Dragovi\'c and Milinkovi\'c's monograph \cite{DM}.

Weinstein \cite{W1} and Novikov \cite{N} formulated multi-valued
variational principles that provided the study of the existence of
periodic orbits on non exact symplectic manifolds. We feel a need
to present these results, along with the classical ones, in a
unified way.

In the first part of the paper, we derive the principle of
stationary isoenergetic action, both on the exact (Section 2) and
 the non-exact symplectic manifolds (Section 3). The variants of
the Maupertuis principle presented in Section 3 are our small
contribution to the subject. They slightly differ from the
existing variational principles formulated either for closed
trajectories, or formulated without imposing the constraint given
by the energy.

In the second part of the paper we point out a contact
interpretation of the Maupertuis principle (Sections 4, 5). There,
it is illustrated how some of the well known properties of the
system of harmonic oscillators, the Kepler  problem (Moser's
regularization) and the Neumann system (relationship with a
geodesic flow on an ellipsoid), have natural descriptions within a
framework of the contact geometry. We believe that one should
expect other interesting relations between the contact structures
and integrable systems as well.

It is a great pleasure to dedicate this paper to Anton
Bilimovi\'c, since his work has fundamentally influenced the
development of Serbian theoretical mechanics.

\section{Principle of stationary isoeneretic action in a phase space}

\subsection{Hamiltonian equations}
Let $L(q,\dot q,t)$ be a hyperregular Lagrangian. We can pass from
velocities $\dot q_i$ to the momenta $p_j$ by using the standard
Legendre transformation \eqref{legendre}. In the coordinates
$(q,p)$ of the cotangent bundle $T^*Q$, the equations of motion
\eqref{Lagrange} read:
\begin{equation} \label{1}
\frac{dq_i}{dt}=\frac{\partial H}{\partial p_i},\qquad
\frac{dp_i}{dt}=-\frac{\partial H}{\partial q_i}, \qquad
i=1,\dots,n,
\end{equation}
where the Hamiltonian function $H(q,p,t)$ is the {\it Legendre
transformation} of  $L$
\begin{equation*} \label{haml}
 H(q,p,t)=E(q,\dot q,t)\vert_{\dot q=\mathbb FL^{-1}(q,p,t)}=\mathbb FL(q,\dot q,t)\cdot \dot q-L(q,\dot
q,t)\vert_{\dot q=\mathbb FL^{-1}(q,p,t)}.
\end{equation*}

Let $p\,dq=\sum_i p_i dq_i$ be the {\it canonical 1-form} and
$$
\omega=d(p\,dq)=dp \wedge dq=\sum_{i=1}^n dp_i\wedge dq_i
$$
the
{\it canonical symplectic} form of the cotangent bundle $T^*Q$.
The system of equations \eqref{1} is Hamiltonian, that is the
vector field
$$
X_H(q,p)=(\partial H/\partial p_1,\dots,\partial H/\partial
p_n,-\partial H/\partial q_1,\dots,-\partial H/\partial q_n)
$$
can be defined by
\begin{equation}\label{Hamiltonian}
i_{X_H}\omega (\,\cdot\,)=\omega(X_H,\,\cdot\,)=-dH(\,\cdot\,).
\end{equation}

\subsection{Characteristic line bundles}
More generally, consider a $2n$-dimensional symplectic manifold
$P$ with a closed, non-degenerate 2-form $\omega$. Let $H:
P\times\R\to \R$ be a smooth, in general time dependent, function.
Consider the corresponding Hamiltonian equation
\begin{equation}\label{2}
\dot x=X_H,
\end{equation}
where the Hamiltonain vector field $X_H(x,t)$ is defined by
\eqref{Hamiltonian}.

If the Hamiltonian $H$ does not depend on time, it is the first
integral of the system. Let $M$ be a regular connected component
of the invariant variety $H=h$, which means $dH\vert_M \ne 0$.

Since $dH(\xi)=0$, $\xi\in T_x M$, from \eqref{Hamiltonian} we see
that $X_H$ generates the symplectic orthogonal of $T_x M$ for all
$x\in M$ --- the {\it characteristic line bundle} $\mathcal L_M$
of $M$. It is the kernel of the form $\omega$ restricted to $M$:
$$
\mathcal L_M=\{\xi\in T_x M\, \vert \, \omega(\xi,T_x M)=0, \,
x\in M\}. $$

Note that $\mathcal L_M$ is determined only by $M$ and not by $H$.
If $F$ is another Hamiltonian defining $M$, $M\subset F^{-1}(c)$,
$dF\vert_M\ne 0$, then the restrictions of the Hamiltonian vector
fields $X_H$ and $X_F$ to $M$ are proportional.

A {\it variation} of a curve $\gamma: [a,b]\to M$ is a mapping:
$\Gamma: [a,b]\times [0,\epsilon] \to M$, such that
$\gamma(t)=\Gamma(t,0)$, $t\in [a,b]$. Denote
$\gamma_s(t)=\Gamma(t,s)$ and
$\delta\gamma(t)=\frac{d}{ds}\vert_{s=0}\gamma_s(t)\in
T_{\gamma(t)} M$.

From Cartan's formula we get  (e.g., see Griffits \cite{Gr}):

\begin{lem} Let $(M,\alpha)$ be a manifold endowed with a 1-form
$\alpha$, $\gamma: [a,b]\to M$ be an immersed curve and $\Gamma$
be a variation of $\gamma$. The Lie derivative of the form
$\Gamma^*\alpha$ in the direction of $\partial/\partial s$ at the
points $[a,b]\times \{0\}$ is equal to
$$
L_{\partial/\partial
s}\Gamma^*\alpha\vert_{(t,0)}=\gamma^*(i_{\delta\gamma(t)}
d\alpha)+d\gamma^*(\alpha(\delta\gamma(t))).
$$
\end{lem}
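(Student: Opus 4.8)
The plan is to apply Cartan's magic formula $L_X = i_X\circ d + d\circ i_X$ to the vector field $X=\partial/\partial s$ on the domain $[a,b]\times[0,\epsilon]$ and to the pulled-back $1$-form $\beta=\Gamma^*\alpha$, and then to restrict everything to $[a,b]\times\{0\}$ via the inclusion $\iota: t\mapsto(t,0)$. This gives
\begin{equation*}
L_{\partial/\partial s}\Gamma^*\alpha = i_{\partial/\partial s}\,d(\Gamma^*\alpha) + d\big(i_{\partial/\partial s}\Gamma^*\alpha\big).
\end{equation*}
Since the exterior derivative commutes with pullback, $d(\Gamma^*\alpha)=\Gamma^*(d\alpha)$, and the identity splits into the two summands appearing on the right-hand side of the claim. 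Before restricting I would record the two kinematic identities $\Gamma_*(\partial/\partial s)\vert_{s=0}=\delta\gamma(t)$ (this is the definition of $\delta\gamma$) and $\Gamma_*(\partial/\partial t)\vert_{s=0}=\dot\gamma(t)=\gamma_*(\partial/\partial t)$.

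For the first summand, I would evaluate the $2$-form $\Gamma^*(d\alpha)$ on the frame $(\partial/\partial s,\partial/\partial t)$ at a point $(t,0)$, pushing both vectors forward by $\Gamma$; using the two identities above this equals $d\alpha\big(\delta\gamma(t),\dot\gamma(t)\big)=(i_{\delta\gamma(t)}d\alpha)(\dot\gamma(t))$. Recognising the right-hand side as the value of the pulled-back $1$-form $\gamma^*(i_{\delta\gamma(t)}d\alpha)$ on $\partial/\partial t$ then identifies $\iota^*\big(i_{\partial/\partial s}\Gamma^*(d\alpha)\big)$ with $\gamma^*(i_{\delta\gamma(t)}d\alpha)$, the first term in the lemma.

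For the second summand I would first note that $i_{\partial/\partial s}\Gamma^*\alpha$ is simply the function $f(t,s)=\alpha_{\Gamma(t,s)}\big(\Gamma_*\partial/\partial s\big)$ on the domain, whose restriction to $s=0$ is $f(t,0)=\alpha_{\gamma(t)}(\delta\gamma(t))=\gamma^*(\alpha(\delta\gamma(t)))$. The only genuinely delicate point is that one must restrict to $s=0$ \emph{after} differentiating, not before; this is handled by naturality of the exterior derivative, $\iota^*\circ d=d\circ\iota^*$, which gives $\iota^*\,df=d\big(\iota^*f\big)=d\gamma^*(\alpha(\delta\gamma(t)))$, exactly the second term. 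Adding the two contributions, and noting that $\iota^*$ extracts the $dt$-component of a $1$-form along $[a,b]\times\{0\}$, yields the stated formula.
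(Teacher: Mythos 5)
Your proposal is correct and follows exactly the route the paper intends: the paper proves the lemma simply by invoking Cartan's formula (citing Griffiths), and your computation --- applying $L_{\partial/\partial s}=i_{\partial/\partial s}\circ d+d\circ i_{\partial/\partial s}$ to $\Gamma^*\alpha$, using $d\Gamma^*=\Gamma^*d$, and restricting to $s=0$ via $\iota^*$ --- is precisely the spelled-out version of that one-line argument. Your care about restricting after differentiating (naturality $\iota^*\circ d=d\circ\iota^*$) and about the equality holding only after pullback to $[a,b]\times\{0\}$ is a correct reading of the lemma's slightly loose statement, so nothing is missing.
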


\begin{thm} Assume that the symplectic form $\omega$ is exact: $\omega=d\alpha$.
Let $M$ be a regular component of the invariant hypersurface
$H^{-1}(h)$. The integral curves $\gamma: [a,b]\to M$ of the
characteristic line bundle $\mathcal L_M$ are extremals of the
(reduced) action functional
$$
A(\gamma)=\int_\gamma \alpha=\int_a^b \alpha(\dot\gamma)dt
$$
in the class of variations $\gamma_s(t)$ such that
$\alpha(\delta\gamma(a))=\alpha(\delta\gamma(b))=0$.
\end{thm}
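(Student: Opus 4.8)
The plan is to compute the first variation $\delta A(\gamma)=\frac{d}{ds}\big|_{s=0}A(\gamma_s)$ directly from Lemma 2.1, and then to read off the Euler--Lagrange condition through the description of $\mathcal L_M$ as the kernel of $\omega|_M$. First I would write $A(\gamma_s)=\int_a^b\gamma_s^*\alpha$, viewing the integrand as the $dt$-component of $\Gamma^*\alpha$ along the slice $[a,b]\times\{s\}$, so that differentiation in $s$ at $s=0$ reduces to integrating the Lie derivative $L_{\partial/\partial s}\Gamma^*\alpha|_{(t,0)}$ over $[a,b]\times\{0\}$. Applying Lemma 2.1 with $d\alpha=\omega$ and using that $\gamma^*(i_{\delta\gamma}\omega)=\omega(\delta\gamma(t),\dot\gamma(t))\,dt$ gives
$$
\delta A(\gamma)=\int_a^b\omega(\delta\gamma(t),\dot\gamma(t))\,dt+\big[\alpha(\delta\gamma(t))\big]_a^b.
$$

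The boundary term $\alpha(\delta\gamma(b))-\alpha(\delta\gamma(a))$ vanishes precisely because we restrict to the admissible class with $\alpha(\delta\gamma(a))=\alpha(\delta\gamma(b))=0$, leaving $\delta A(\gamma)=\int_a^b\omega(\delta\gamma,\dot\gamma)\,dt$. Next I would use that, since the variation is through curves lying in $M$, the variation field $\delta\gamma(t)$ takes values in $T_{\gamma(t)}M$, and conversely every smooth section of $\gamma^*TM$ supported in the interior of $[a,b]$ is realized as some admissible $\delta\gamma$ (for instance by flowing $\gamma$ along a compactly supported vector field on $M$). By the fundamental lemma of the calculus of variations, the condition $\delta A(\gamma)=0$ for all such variations is then equivalent to $\omega(\dot\gamma(t),\eta)=0$ for every $\eta\in T_{\gamma(t)}M$ and every $t$, i.e.\ to $\dot\gamma(t)\in(\mathcal L_M)_{\gamma(t)}$, which is exactly the statement that $\gamma$ is an integral curve of the characteristic line bundle. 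The converse implication is immediate from the same variation formula.

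The step demanding the most care is the reduction in the first paragraph: one must justify differentiation under the integral sign and, in particular, verify that only the tangential ($dt$) part of $L_{\partial/\partial s}\Gamma^*\alpha$ contributes to $\frac{d}{ds}A(\gamma_s)$, since $\Gamma^*\alpha$ also carries a $ds$-component that plays no role in the slice integral. The second point to check carefully is that interior variation fields satisfying $\alpha(\delta\gamma)=0$ at the endpoints already form a rich enough family to invoke the fundamental lemma; this is what converts the comparatively weak boundary condition imposed in the statement into the pointwise kernel condition that defines $\mathcal L_M$. Everything else is a routine identification of the two terms produced by Lemma 2.1.
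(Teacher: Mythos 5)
Your proposal is correct and takes essentially the same route as the paper: both apply Lemma 2.1 to obtain the first-variation formula $\frac{d}{ds}\big(\int_{\gamma_s}\alpha\big)\vert_{s=0}=\int_a^b\omega(\delta\gamma,\dot\gamma)\,dt+\alpha(\delta\gamma(b))-\alpha(\delta\gamma(a))$, kill the boundary terms via the admissibility condition, and identify the vanishing of the remaining integral for all variations in $M$ with the condition $\dot\gamma\in\ker\omega\vert_M=\mathcal L_M$. The only difference is that you spell out the routine points (differentiation under the integral sign and the richness of interior variation fields needed for the fundamental lemma) that the paper leaves implicit.
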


The proof is a direct consequence of Lemma 2.1. We have
\begin{equation}\label{izvod}
\frac{d}{ds}\left(\int_{\gamma_s}
\alpha\right)\vert_{s=0}=\int_a^b
\omega(\delta\gamma(t),\dot\gamma(t))dt+\alpha(\delta\gamma(b))-\alpha(\delta\gamma(a)).
\end{equation}

The expression above is equal to zero for all variations
$\gamma_s(t)$ if and only if $\dot \gamma$ is in the kernel of the
form $\omega=d\alpha$ restricted to $M$. That is, $\gamma(t)$ is
an integral curve of the line bundle $\mathcal L_M$.

\subsection{}
Applying Theorem 2.1 to the symplectic space $(T^*Q,dp\wedge dq)$
we obtain Poincar\'e's formulation of the Maupertuis principle in
a phase space \cite{P}.

\begin{thm}
If the Hamiltonian function $H=H(q,p)$ does not depend on time,
then the phase trajectories of the canonical equations \eqref{1}
lying on the regular connected component $M$ of the surface
$\{H(q,p)=h\}$ are extremals of the reduced action
\begin{equation}\label{red-act-ham}
A(\gamma)=\int_\gamma p\,dq
\end{equation}
 in the class of curves $\gamma$ lying on
$M$ and connecting the subspaces $T^*_{q_0} Q$ and $T^*_{q_1} Q$.
\end{thm}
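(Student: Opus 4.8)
The plan is to derive Theorem 2.2 as a special case of Theorem 2.1, with a careful check that the boundary conditions match up correctly.

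The plan is to deduce Theorem 2.2 directly from Theorem 2.1 by specializing to $(P,\omega)=(T^*Q,\,dp\wedge dq)$ with the exact primitive $\alpha=p\,dq$. The first thing I would do is recall what the text has already established: since $X_H$ generates the symplectic orthogonal of $T_xM$ at every $x\in M$, it spans the characteristic line bundle $\mathcal L_M$. Hence every phase trajectory of the canonical equations \eqref{1} lying on $M$ is, up to reparametrization, an integral curve of $\mathcal L_M$; and because $A(\gamma)=\int_\gamma p\,dq$ is an integral of a $1$-form over the curve, the property of being an extremal depends only on the unparametrized trajectory. Theorem 2.1 therefore already guarantees that such trajectories are extremals of $A$ in the class of variations satisfying $\alpha(\delta\gamma(a))=\alpha(\delta\gamma(b))=0$, so the entire content of Theorem 2.2 reduces to reconciling this analytic boundary condition with the geometric one --- ``connecting the subspaces $T^*_{q_0}Q$ and $T^*_{q_1}Q$''.

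The key observation, and the step I expect to carry the weight of the argument, is the behaviour of $\alpha=p\,dq$ on vectors tangent to a cotangent fibre. A curve in the admissible class of Theorem 2.2 has $\gamma(a)\in T^*_{q_0}Q$ and $\gamma(b)\in T^*_{q_1}Q$, and a variation through such curves keeps the endpoints in the respective fibres; consequently $\delta\gamma(a)$ is tangent to $T^*_{q_0}Q$ and $\delta\gamma(b)$ is tangent to $T^*_{q_1}Q$. Vectors tangent to a fibre are vertical, i.e. of the form $\sum_i c_i\,\partial/\partial p_i$ with no $\partial/\partial q_i$ component, and the canonical form $\alpha=\sum_i p_i\,dq_i$ annihilates every such vector. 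Hence $\alpha(\delta\gamma(a))=\alpha(\delta\gamma(b))=0$ holds automatically, so the admissible variations of Theorem 2.2 form a subclass of those allowed in Theorem 2.1.

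With this in hand I would finish by tracing through the first-variation formula \eqref{izvod}. For a variation staying on $M$ we have $\delta\gamma(t)\in T_{\gamma(t)}M$ for all $t$, and since $\dot\gamma$ spans $\mathcal L_M$ the integrand $\omega(\delta\gamma(t),\dot\gamma(t))$ vanishes identically, while the two boundary terms vanish by the previous paragraph. Thus the first variation of $A$ is zero and $\gamma$ is an extremal, which is exactly the assertion of Theorem 2.2. The only genuine point requiring care is the direction of the equivalence: Theorem 2.1 characterizes integral curves of $\mathcal L_M$ by the vanishing of the first variation over its full boundary-constrained class, whereas here I need only the one-sided implication that a phase trajectory is stationary, and this implication clearly survives the passage to the smaller admissible class. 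I do not foresee any analytic difficulty beyond correctly identifying the vertical tangent spaces and verifying that $p\,dq$ kills them.
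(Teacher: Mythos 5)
Your proposal is correct and takes exactly the paper's route: the paper obtains Theorem 2.2 simply by applying Theorem 2.1 to $(T^*Q,\,dp\wedge dq)$ with $\alpha=p\,dq$, and your verification that variations keeping the endpoints in the fibres $T^*_{q_0}Q$ and $T^*_{q_1}Q$ have vertical endpoint vectors, on which $p\,dq$ vanishes, so that the boundary conditions $\alpha(\delta\gamma(a))=\alpha(\delta\gamma(b))=0$ of Theorem 2.1 hold automatically, is precisely the detail the paper leaves implicit. No gaps.
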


Note that Theorem 1.1 follows from Theorem 2.2 (e.g., see Arnold
\cite{Ar}). Suppose that the Hamiltonian system \eqref{1} is a
Legendre transformation of the Lagrangian system \eqref{Lagrange}.
The main observation is that if $\underline{\gamma}(\tau)$ is a
configuration space curve parametrized such that
$E(\underline{\gamma},d\underline{\gamma}/d\tau)=h$, then the
lifted curve $ \gamma=\mathbb
FL(\underline{\gamma},d\underline{\gamma}/{d\tau}) $ lyes on $M$
and the reduced actions \eqref{red-action} and \eqref{red-act-ham}
for $\underline{\gamma}$ and $\gamma$ are equal:
$S(\underline{\gamma})=A(\gamma)$ (see Fig. 1).

\begin{figure}[ht]
\includegraphics{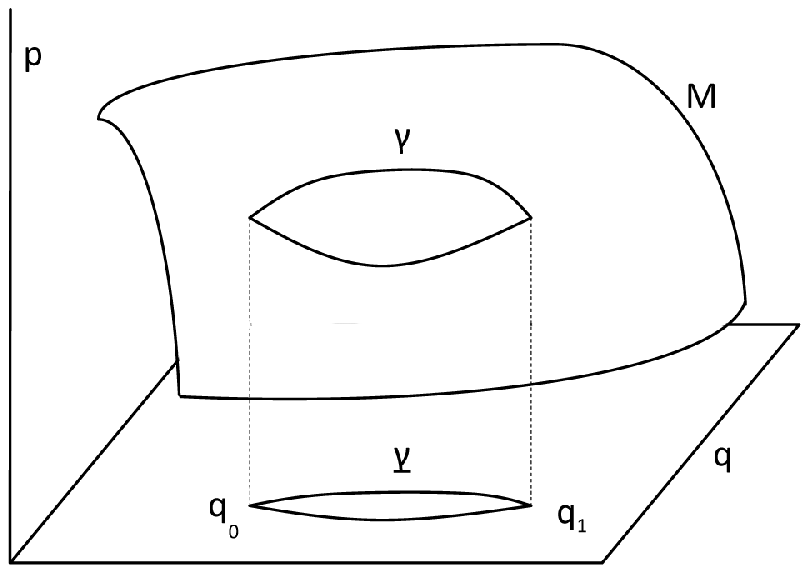}
\caption{}
\end{figure}

\subsection{Jacobi's metric}

Consider a natural mechanical system on $Q$ defined by the
Lagrangian function:
\begin{equation}
L(q,\dot q)=T+B-V=\frac12\sum_{ij}K_{ij}\dot q_i\dot q_j+\sum_i
B_i \dot q_i-V(q). \label{LG}
\end{equation}
Here $ds^2=\sum_{ij}K_{ij}dq_idq_j$ is a Riemannian metric on $Q$,
$V(q)$ is a potential function and $\theta=\sum_i B_i dq_i$ is a
1-form defining a gyroscopic (or magnetic) field $\sigma=d\theta$
(see Section 3).

The energy of the system \eqref{Energy} is the sum of the kinetic
and the potential energy
$$
E(q,\dot  q)=T+V=\frac12\sum_{ij}K_{ij}\dot q_i\dot q_j+V(q).
$$

In the region of the configuration space $Q_h$ where $V(q)<h$, we
can define the {\it Jacobi metric}
\begin{equation}\label{Jacobi}
ds^2_J=2{(h-V(q))}\,ds^2=2{(h-V(q))} \,\sum_{ij} K_{ij}dq_idq_j.
\end{equation}

The following version of the Maupertuis principle for Lagrangians
of the form \eqref{LG} is well known (e.g., see Kozlov \cite{Ko}).

\begin{thm}\label{JM}
Among all curves $q={\gamma}(\tau)$ connecting the points $q_0,
q_1\in Q_h$ and parametrized so that the energy has a fixed value
$E=h$, the trajectory of the equations of dynamics
\eqref{Lagrange} with Lagrangian \eqref{LG} is an extremal of the
integral
\begin{equation}\label{ra}
S({\gamma})=\int_{{\gamma}} ds_J+\theta.
\end{equation}
In particular, if there are no gyroscopic forces, the trajectories
of the system within $Q_h$, up to reparametrization, are geodesic
lines of the Jacobi metric $ds^2_J$.
\end{thm}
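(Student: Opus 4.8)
The plan is to deduce the statement from the configuration-space Maupertuis principle (Theorem 1.1), by evaluating its reduced action \eqref{red-action} on the explicit Lagrangian \eqref{LG}. First I would compute the momenta via the Legendre transformation \eqref{legendre}:
\[
p_i=\frac{\partial L}{\partial \dot q_i}=\sum_j K_{ij}\frac{dq_j}{d\tau}+B_i ,
\]
so that the integrand of the reduced action splits as
\[
\frac{\partial L}{\partial \dot q}\cdot\frac{dq}{d\tau}
=\sum_{ij}K_{ij}\frac{dq_i}{d\tau}\frac{dq_j}{d\tau}+\sum_i B_i\frac{dq_i}{d\tau}
=2T+\theta\!\left(\frac{dq}{d\tau}\right).
\]
Integrating the second term along $\gamma$ manifestly produces $\int_\gamma\theta$, so the whole matter reduces to identifying $\int_a^b 2T\,d\tau$ with the Jacobi length $\int_\gamma ds_J$.

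The key step uses the isoenergetic constraint. Since every admissible curve is parametrized so that $E=T+V=h$ holds pointwise (by \eqref{Energy}), along such a curve one has $2T=2(h-V)$ at each $\tau$. Writing $ds$ for the arc-length element of the metric $ds^2=\sum_{ij}K_{ij}dq_idq_j$, the speed satisfies $ds=\sqrt{\sum_{ij}K_{ij}\dot q_i\dot q_j}\,d\tau=\sqrt{2T}\,d\tau=\sqrt{2(h-V)}\,d\tau$. Hence, recalling the definition \eqref{Jacobi} of the Jacobi metric,
\[
2T\,d\tau=2(h-V)\,d\tau=\sqrt{2(h-V)}\,ds=ds_J ,
\]
so that $\int_a^b 2T\,d\tau=\int_\gamma ds_J$. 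Combining this with the previous paragraph, the reduced action equals $\int_\gamma ds_J+\theta$ on the entire class of energy-$h$ curves, which is exactly the functional \eqref{ra}.

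Finally I would conclude that, because the two functionals coincide as functionals on the admissible class, they share the same critical points; by Theorem 1.1 the trajectory of \eqref{Lagrange} is an extremal of the reduced action \eqref{red-action}, hence of \eqref{ra}. For the last assertion, setting $\theta=0$ (no gyroscopic forces) leaves the pure Jacobi length $\int_\gamma ds_J$, whose extremals among curves joining $q_0$ and $q_1$ are precisely the unparametrized geodesics of $ds^2_J$; this yields the geodesic statement up to reparametrization.

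The point that needs care — and the one I expect to be the main obstacle — is that the pointwise identity $2T\,d\tau=ds_J$ must be established for \emph{every} admissible curve, not merely for the solution, since it is this equality of functionals (and not just of their values along the trajectory) that transfers the extremal property from \eqref{red-action} to \eqref{ra}. This is exactly where the constraint $E=h$ is indispensable, and where one must reconcile it with the fact that in Theorem 1.1 the parametrizing interval $[a,b]$ is not fixed: the functional $\int_\gamma ds_J+\theta$ is reparametrization invariant, which is consistent with, and accounts for, that freedom.
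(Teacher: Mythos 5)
Your proposal is correct and follows essentially the same route as the paper: you compute the momenta from the Lagrangian \eqref{LG}, observe that the integrand of the reduced action \eqref{red-action} equals $2T+\theta(\dot q)$, and use the isoenergetic constraint $E=T+V=h$ (which forces $d\tau=ds/\sqrt{2(h-V)}$) to rewrite $2T\,d\tau=2(h-V)\,d\tau=ds_J$, so that \eqref{red-action} and \eqref{ra} coincide as functionals on the admissible class and Theorem~1.1 applies. Your closing remark about needing the identity on \emph{every} admissible curve, not just the trajectory, is exactly the point the paper handles implicitly with the phrase ``the parameter $\tau$ \dots must be proportional to the length,'' so nothing is missing.
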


Indeed, in order to guarantee a fixed value of the energy
$$
E=T+V=\frac12\sum_{ij}K_{ij}\frac{dq_i}{d\tau}\frac{dq_j}{d\tau}+V(q)=\frac12\left(\frac{ds}{d\tau}\right)^2+V(q)=h,
$$ the
parameter $\tau$ of the curve $q={\gamma}(\tau)$ must be
proportional to the length $d\tau=ds/\sqrt{2(h-V)}$. Therefore
\begin{eqnarray*}
\int_a^b \frac{\partial L}{\partial\dot q}(\tau)\cdot
\frac{dq}{d\tau}\, d\tau &=& \int_a^b \left(\sum_{ij} K_{ij}
\frac{dq_i}{d\tau}\frac{dq_j}{d\tau}+\sum_i
B_i\frac{dq_i}{d\tau}\right) d\tau \\
&=& \int_a^b \left(2(h-V(q))+\sum_i
B_i\frac{dq_i}{d\tau}\right)d\tau=\int_{{\gamma}} ds_J+\theta.
\end{eqnarray*}

\begin{rem}
The variational principle stated in Theorem \ref{JM} is used in
the study of periodic trajectories of natural mechanical systems
with exact magnetic fields (see \cite{Ta} and references therein).
Note also that the Maupertuis principle for a configuration space
$Q$ being a Banach space can be found in \cite{MR, RBB}.
\end{rem}

\subsection{The Hamiltonian principle of least action.}

Consider a {\it Poincar\'e--Cartan} 1-form $pdq-Hdt$ on the
extended phase space $T^*Q \times\R(q,p,t)$, where $H: T^*Q
\times\R \to \R$ is a Hamiltonian function. The phase trajectories
of the canonical equations \eqref{1} are extremals of the action
\begin{equation}\label{poincare}
A_H(\gamma)=\int_\gamma pdq-Hdt
\end{equation}
in the class of curves $\gamma(t)=(q(t),p(t),t)$ connecting the
subspaces $T^*_{q_0} Q\times \{t_0\} $ and $T^*_{q_1} Q\times
\{t_1\}$ (Poincar\'e's modification of the Hamiltonian principle
of least action \cite{P}). Namely, a vector $(\xi,1)$, $\xi\in
T_{(q,p)}(T^*Q)$ belongs to $\ker d(pdq-Hdt)$ at $(q,p,t)$ if and
only if $\xi=X_H(q,p,t)$ (see \cite{Ar, MR}).

Obviously, we can replace $(T^*Q,dp\wedge dq)$ by an arbitrary
exact symplectic manifold $(P,\omega=d\alpha)$. In particular, if
we consider the action $A_H(\gamma)=\int_\gamma \alpha-Hdt$ on the
free loop space $\Omega(P)=C^\infty(S^1,P)$, $S^1 =\R/\mathbb Z$
of $P$ and $H$ is 1-periodic in $t$-variable, then the critical
points of $A_H$ are 1-periodic orbits of the equation \eqref{2}.

For a given time-independent Hamiltonian $H: P \to\R$ with a
regular level set $H^{-1}(h)$, the periodic orbits having all
positive periods and energy $h$ can be obtained by the use of
modified action:
\begin{equation}\label{Rab}
A_{H,h}(\gamma,\lambda)=\int_0^1 \alpha(\dot
\gamma)dt-\lambda\int_0^1 (H(\gamma(t))-h)dt,
\end{equation}
defined on the space $\Omega(P)\times\R^+$ (see \cite{Ra, W1}).
The critical points $(\gamma,\lambda)$ of $A_{H,h}$ correspond to
$\lambda$-periodic orbits $x(t)=\gamma(t/\lambda)$ that lie on the
energy hypersurface $H^{-1}(h)$. Moreover, Weinstein defined
actions $A_H$ and $A_{H,h}$ when the symplectic form is not exact
as well \cite{We}.

The Lagrangian analogue of the functional \eqref{Rab} is
$$
S_{L,h}({\gamma},\lambda)=\int_0^1 \lambda
L({\gamma},\dot{{\gamma}}/\lambda)dt+\lambda h, \qquad
{\gamma}\in\Omega(Q), \, \lambda>0
$$
(see \cite{CMP}). The pair $({\gamma},\lambda)$ is a critical
point of $S_{L,h}$ if and only if $q(t)={\gamma}(t/\lambda)$ is a
$\lambda$-periodic solution of the Euler--Lagrange equation
\eqref{Lagrange} with energy $h$.

Variational principles related to the  action \eqref{poincare},
which arise by a reduction process are given in
 \cite{CMPR}.

\section{The Maupertuis principle on non-exact symplectic
manifolds}

\subsection{Magnetic flows}
Consider a natural mechanical system given by the Lagrangian
function \eqref{LG}. After the Legendre transformation, it takes
the form \eqref{1} with the Hamiltonian function
\begin{equation}\label{HG}
H(q,p)=\frac12 \langle p-\theta,p-\theta \rangle +V(q)=\frac12
\sum_{ij} K^{ij}(p_i-B_i)(p_j-B_j)+V(q),
\end{equation}
where $K^{ij}$ is the inverse of the metric tensor $K_{ij}$.

The transformation
\begin{equation}\label{T}
T_\theta: (q,p)\longmapsto (q,p-\theta)
\end{equation}
is a symplectomorphism between $(T^*Q,dp\wedge dq)$ and a
"twisted" cotangent bundle $(T^*Q, dp\wedge dq+\pi^*\sigma)$,
where $\pi: T^*Q\to Q$ is the natural projection and
$\sigma=d\theta$.

In new coordinates, also denoted by $(q,p)$, the Hamiltonian
\eqref{HG} takes the usual form, the sum of the kinetic and the
potential energy:
$$
H(q,p)=\frac12 \langle p,p\rangle+V(q)=\frac12\sum_{ij} K^{ij} p_i
p_j+V(q),
$$
while the equations of motion take "non-canonical" form:
\begin{equation} \label{magnetic_flow}
\frac{dq^i}{dt}=\frac{\partial H}{\partial p_i}, \qquad
\frac{dp_i}{dt}=-\frac{\partial H}{\partial
q^i}+\sum_{j=1}^{n}F_{ij} \frac{\partial H}{\partial p_j},
\end{equation}
where $\sigma=\sum_{1 \leq i<j \leq n} F_{ij}(q) dq_i \wedge
dq_j$. The equations are Hamiltonian with respect to the
symplectic form $\omega=dp\wedge dq+\pi^*\sigma$.

One can consider the system \eqref{magnetic_flow} associated to a
non-exact 2-form $\sigma$ as well (for example, the motion of a
particle in a magnetic monopole field \cite{MR}). In this case,
the Lagrangian \eqref{LG} is defined only locally. Nevertheless,
it is very interesting that the Hamiltonian (Weinstein \cite{W1}
and Tuynman \cite{Tu}) and the Maupertuis principles (Novikov
\cite{No}) of least action can be still defined.

\subsection{Multivalued reduced action}
Let $(P,\omega)$ be a non exact symplectic manifold and let
$M=H^{-1}(h)$ be a regular isoenergetic hypersurface. The main
observation concerning the Maupertuis principle can be stated as
follows (see \cite{No, Ko} for the reduced action \eqref{ra}).

 Let $U\subset P$ be a region where $\omega$ is exact and let
$\omega=d\alpha_1=d\alpha_2$. Consider a variation
$\gamma_s(t)=\Gamma(t,s)$, $t\in [0,1], s\in [0,\epsilon]$ with
fixed endpoints of a curve $\gamma$ lying in $M\cap U$. Then
$$
\int_{\gamma_\epsilon}
\alpha_1-\int_{\gamma_0}\alpha_1=-\int_{[0,1]\times [0,\epsilon]}
\Gamma^* d\alpha_1 =-\int_{[0,1]\times [0,\epsilon]} \Gamma^*
d\alpha_2=\int_{\gamma_\epsilon} \alpha_2-\int_{\gamma_0}\alpha_2.
$$
Therefore
$$
\int_{\gamma_\epsilon} \alpha_1-\int_{\gamma_0} \alpha_1=
\int_{\gamma_\epsilon} \alpha_2-\int_{\gamma_0} \alpha_2
$$
and, although $\int_\gamma \alpha_i$ depends on the form
$\alpha_i$,  the derivative
$$
\frac{d}{ds}\vert_{s=0}\int_0^1 \alpha(\dot \gamma_s)dt =\int_0^1
\omega(\delta\gamma(t),\dot\gamma(t))dt
$$ does not depend on $\alpha_i$,  $i=1,2$.
One can define an appropriate multi-valued functional on a space
of paths with fixed endpoints, such that an extremal (if exist) is
exactly the integral curve of the characteristic foliation on $M$.
However, as in the case of the symplectic homology (see
\cite{HZ}),  the situation simplifies in the aspherical case which
is considered below.

\subsection{Aspherical symplectic manifolds}
The symplecic manifold $(P,\omega)$ is {\it aspherical} if
$\omega$ vanishes on $\pi_2(P)$. Ofcourse, if $\omega$ is exact or
$\pi_2(P)=0$, then $(P,\omega)$ is aspherical.

Consider the equation \eqref{2}, where $H$ does not depend on
time. Let $M$ be a regular component of $H^{-1}(h)$ and $c:
[0,1]\to P$ be an immersed curve with endpoints $x_0=c(0)\in M$
and $x_1=c(1)\in M$. Define $\Omega_{c}^h(x_0,x_1)$ as the space
of regular paths that are homotopic to $\tilde\gamma$ in $P$:
$$
\Omega_{c}^h(x_0,x_1)=\{\gamma: [0,1]\to M \,\vert\,
\gamma(0)=x_0,\, \gamma(1)=x_1, \dot\gamma(t)\ne 0, t\in [0,1], \,
\gamma \sim_P c\}.
$$

The space of all regular paths connecting $x_0$ and $x_1$ and
laying in $M$ is the union
\begin{equation*}\label{putevi}
\Omega^h(x_0,x_1)=\bigcup_c \Omega_{c}^h(x_0,x_1),
\end{equation*}
where we take representatives $c$ for all non-homotopic paths (in
$P$) connecting $x_0$ and $x_1$.

If we suppose that $(P,\omega)$ is simplectically aspherical then
we can define a single-valued {\it reduced action}:
\begin{equation}\label{action}
A: \Omega^h(x_0,x_1) \to \R, \qquad
A(\gamma)\vert_{\Omega_{c}^h(x_0,x_1)}=\int_D f_\gamma^*\omega,
\end{equation}
where
$$
D=\{z\, \vert\, z\in\mathbb C,\, \vert z\vert \le 1\}
$$
is the unit disk, $f_\gamma: D\to P$ is an arbitrary mapping that
is smooth for $\vert z \vert <1$, continuous on $D$  and
$\gamma(t)=f_\gamma(\exp(\sqrt{-1}\pi t))$,
$c(t)=f_\gamma(\exp(\sqrt{-1}\pi (2-t))$, $t\in [0,1]$. That is,
$f(D)$ is a surface with the boundary $\partial D=\gamma \cdot
c^{-1}$.

\begin{figure}[ht]
\includegraphics{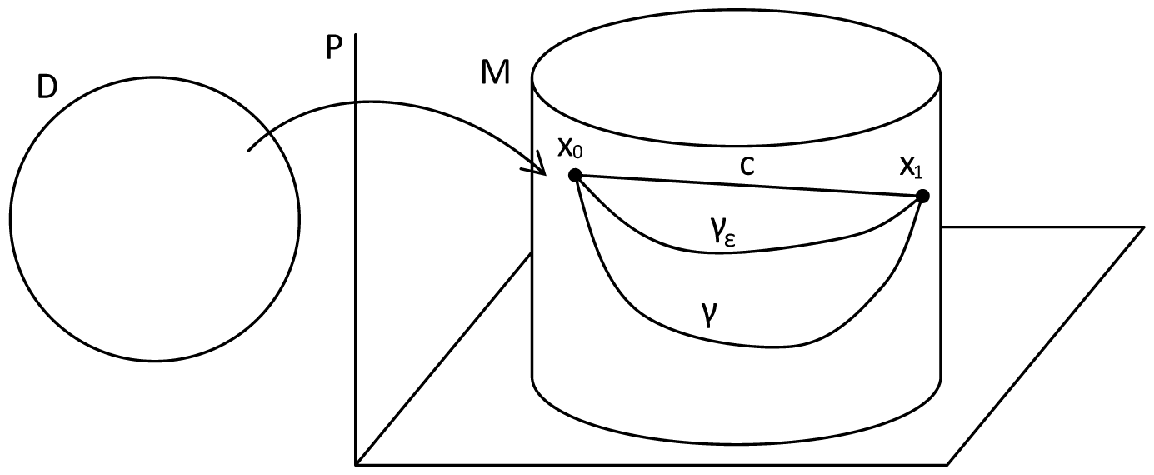}
\caption{}
\end{figure}

Since $\gamma\sim_P c$ we can always find a mapping $f$ with
required properties. From $\omega\vert_{\pi_2(P)}=0$, the value
$A(\gamma)$ does not depend on the choice of $f$.

\begin{thm}
The integral curves $\gamma: [0,1]\to M$ of the characteristic
line bundle $\mathcal L_M$ that connect $x_0$ and $x_1$ are
extremals of the reduced action \eqref{action}.
\end{thm}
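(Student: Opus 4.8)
The plan is to reduce the statement to the variational computation already carried out in the exact case. The decisive point is that, by the asphericity hypothesis, the value $A(\gamma)$ in \eqref{action} does not depend on the filling disk $f_\gamma$; we are therefore free to evaluate $A$ on whichever surface is most convenient, and in particular on one built directly from the variation.

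First I would fix the homotopy class and take a variation $\gamma_s(t)=\Gamma(t,s)$, $t\in[0,1]$, $s\in[0,\epsilon]$, of $\gamma=\gamma_0$ inside $M$ with fixed endpoints, so that $\delta\gamma(t)\in T_{\gamma(t)}M$ and $\delta\gamma(0)=\delta\gamma(1)=0$. Since each $\gamma_s$ is joined to $\gamma$ by the variation itself, the whole family stays in the single component $\Omega_c^h(x_0,x_1)$ for small $s$, and $A(\gamma_s)$ is well defined by \eqref{action}.

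Next I would build the fillings compatibly. Starting from a fixed disk $f_\gamma$ for $\gamma$, glue to it along $\gamma$ the variation strip $\Gamma|_{[0,1]\times[0,s]}$. Because the endpoints are held fixed, the two vertical edges of the strip collapse to the constant paths at $x_0$ and $x_1$, so the boundary of the strip is $\gamma_s\cdot\gamma^{-1}$; gluing cancels $\gamma$ and leaves a surface whose boundary is $\gamma_s\cdot c^{-1}$, an admissible filling for $\gamma_s$. Since $A(\gamma_s)$ is independent of the filling, evaluating it on this glued surface gives
\begin{equation*}
A(\gamma_s)-A(\gamma)=\int_{[0,1]\times[0,s]}\Gamma^*\omega,
\end{equation*}
the strip being oriented so as to match the boundary orientation of the filling. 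Differentiating in $s$ at $s=0$ then produces, exactly as in formula \eqref{izvod} of Theorem 2.1,
\begin{equation*}
\frac{d}{ds}\Big|_{s=0}A(\gamma_s)=\int_0^1\omega(\delta\gamma(t),\dot\gamma(t))\,dt,
\end{equation*}
with no boundary contributions, these being killed by the fixed-endpoint condition.

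It then remains to conclude exactly as in Theorem 2.1: the admissible variation fields are all sections $\delta\gamma(t)\in T_{\gamma(t)}M$ vanishing at $t=0,1$, and the displayed integral vanishes for every such field precisely when $\omega(\xi,\dot\gamma(t))=0$ for all $\xi\in T_{\gamma(t)}M$, i.e.\ when $\dot\gamma(t)$ lies in the kernel of $\omega|_M$. This is the statement that $\gamma$ is an integral curve of the characteristic line bundle $\mathcal{L}_M$. I expect the only genuine obstacle to be the bookkeeping in the gluing step: verifying that attaching the strip to a fixed disk again yields a surface with the prescribed boundary $\gamma_s\cdot c^{-1}$, and recognizing that it is precisely the vanishing of $\omega$ on $\pi_2(P)$ that makes the difference $A(\gamma_s)-A(\gamma)$ equal to the strip integral rather than differing from it by the $\omega$-area of a two-sphere. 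Once the filling-independence of $A$ is invoked, the remainder is identical to the exact case.
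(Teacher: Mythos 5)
Your proof is correct and takes essentially the same route as the paper: the paper's own proof likewise expresses $A(\gamma_s)-A(\gamma)$ as the strip integral $\pm\int_{[0,1]\times[0,s]}\Gamma^*\omega$ (justified, exactly as in your gluing argument, by $\omega\vert_{\pi_2(P)}=0$ and the filling-independence of the action), differentiates at $s=0$ to obtain $\int_0^1\omega(\delta\gamma(t),\dot\gamma(t))\,dt$, and concludes as in the exact case of Theorem 2.1. The only difference is expository: you make explicit the disk-plus-strip bookkeeping that the paper compresses into its first displayed identity.
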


\begin{proof}
Consider a variation $\gamma_s(t)=\Gamma(t,s)$,  $t\in [0,1]$,
$s\in [0,\epsilon]$ of $\gamma$ lying in $M$. By using
$\omega\vert_{\pi_2(P)}=0$, we get
$$
A(\gamma_\epsilon)-A(\gamma)=\int_{D}
(f^*_{\gamma_\epsilon}\omega-f^*_\gamma\omega)= -\int_{[0,1]\times
[0,s]} \Gamma^*\omega=\int_0^1\int_0^\epsilon
\omega(\frac{\partial\Gamma}{\partial
s},\frac{\partial\Gamma}{\partial t}) dt ds.
$$
Thus, as above,
$$
\frac{d}{ds}\vert_{s=0}A(\gamma_s) =\int_0^1
\omega(\delta\gamma(t),\dot\gamma(t))dt,
$$
is zero for all variations $\gamma_s(t)$ if and only if the
velocity vector field $\dot\gamma(t)$ is a section of
$\ker\omega\vert_M$.
\end{proof}

\subsection{A torus valued reduced action}

 Tuynaman proposed a torus-valued action, such
 that the multi-valued Poincar\'e action \eqref{poincare} can be seen as a composition of a multi-valued function
on a torus and a torus-valued action \cite{Tu}. In this subsection
we follow Tuynman's construction \cite{Tu} in order to formulate
the principle of stationary isoenergetic action.

Consider a manifold $P$ with a symplectic 2-form
$$
\omega=\sum_{a=1}^n \mu_a \beta^a,
$$
where $\beta^a$ are 2-forms, representing integrals cohomology
classes. We take decomposition with minimal $n$. Then the
parameters $\mu_a$ are independent over $\mathbb Q$, in particular
$\mu=\mu_1+\dots+\mu_n\ne 0$. To $\omega$ we associate the 1-form
$$
\lambda=\sum_{a=1}^n \mu_a dy^a
$$
on a torus $\mathbb T^{n}=\{(\exp({\sqrt{-1}
y^1}),\dots,\exp(\sqrt{-1} y^n)\}$. It can be consider as a
differential of a multi-valued function $\Lambda$ on $\mathbb
T^n$: $\lambda=d\Lambda$.  Also, for $a=1,\dots,n$, let us
 define principal $S^1$-bundles
\begin{equation} \label{BUNDLE}
\begin{aligned}
S^1\;\longrightarrow\;\;& Y_a \\
&\;\Big\downarrow{}^{\rho_a} \\
&\;P
\end{aligned}
\end{equation}
having the connections $\theta^a$ with the curvature forms
$\beta^a$ (see Kobayashi \cite{Kob}).

Let $\gamma(t)$, $t\in [t_0,t_1]$ be a piece-wise smooth, closed
curve on $P$. Recall, a piece-wise smooth curve $\tilde
\gamma^a(t) \subset Y_a$ is a {\it horizontal lift} of $\gamma$ if
$\rho_a\circ\tilde\gamma^a(t)=\gamma(t)$ and
$\theta^a(\frac{d}{dt} \tilde\gamma^a(t))=0$, whenever the
velocity vector is defined. The {\it holonomy} $\Hol^a(\gamma)$ is
an element $g\in S^1$, such that
$g\cdot\tilde\gamma^a(t_0)=\tilde\gamma^a(t_1)$.

\begin{lem}[\cite{Tu}] \label{TU}
 Let $\gamma_s(t)=\Gamma(t,s)$ be a variation of $\gamma: [0,1]\to P$ with fixed
endpoints and let $c: [0,1]\to P$ be an arbitrary curve connecting
$x_0=\gamma(0)$ and $x_1=\gamma(1)$. We have a family of closed
orbits $\bar\gamma_s=\gamma_s\cdot c^{-1}$. The derivative of
$\Hol^a(\bar\gamma_s)$ is given by:
$$
\frac{d\Hol^a(\bar\gamma_s)}{ds}\vert_{s=0}=\int_0^1 \beta^a (\dot
\gamma(t),\delta\gamma(t))dt\,\cdot\,\frac{\partial}{\partial
y^a}.
$$
\end{lem}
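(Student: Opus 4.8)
The plan is to compute the variation of the $S^1$-valued holonomy by reducing it, via Stokes' theorem, to an integral of the curvature $\beta^a$ over the surface swept out by the variation; the abelian nature of the structure group is what makes this possible. First I would eliminate the fixed arc $c$. Since $c$ and the endpoints $x_0,x_1$ do not depend on $s$, the closed loop $\bar\gamma_s\cdot\bar\gamma_0^{-1}=\gamma_s\cdot c^{-1}\cdot c\cdot\gamma^{-1}$ reduces, after the backtrack $c^{-1}\cdot c$ is cancelled (it contributes trivially to the holonomy), to $\gamma_s\cdot\gamma^{-1}$. Because $S^1$ is abelian the holonomy is additive under concatenation of loops based at $x_0$ and satisfies $\Hol^a(\ell^{-1})=-\Hol^a(\ell)$, so, through the identification $T_gS^1\cong\R\,\partial/\partial y^a$,
$$
\Hol^a(\bar\gamma_s)-\Hol^a(\bar\gamma_0)=\Hol^a(\gamma_s\cdot\gamma^{-1}).
$$
The loop $\gamma_s\cdot\gamma^{-1}$ bounds the surface $\Gamma([0,1]\times[0,s])$ swept out by the variation, the lateral arcs $\Gamma(0,\cdot)\equiv x_0$ and $\Gamma(1,\cdot)\equiv x_1$ being constant.

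Next I would invoke the abelian holonomy--curvature relation. The curvature of the connection $\theta^a$ descends to $\beta^a$ on $P$, that is $d\theta^a=\rho_a^*\beta^a$. Choosing a local section $\sigma$ of $\rho_a$ over the variation surface and setting $A=\sigma^*\theta^a$, one has $dA=\sigma^*\rho_a^*\beta^a=\Gamma^*\beta^a$, and the horizontal lift computes $\Hol^a(\gamma_s\cdot\gamma^{-1})$ as the integral of $A$ around the boundary. Stokes' theorem then gives, once the orientation of the surface and the sign in the horizontal lift are fixed so as to match the statement,
$$
\Hol^a(\bar\gamma_s)-\Hol^a(\bar\gamma_0)=\int_{[0,1]\times[0,s]}\Gamma^*\beta^a\pmod{\mathbb Z}.
$$
The integer ambiguity reflects the hypothesis that $\beta^a$ represents an integral class, so that its integral over any closed surface lies in the period lattice; this is precisely what makes the $S^1$-valued left-hand side well defined and independent of the chosen section, of the bounding surface, and of the auxiliary curve $c$.

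Finally I would differentiate. Writing $\Gamma^*\beta^a=\beta^a(\partial_t\Gamma,\partial_s\Gamma)\,dt\wedge ds$ and applying the fundamental theorem of calculus in the $s$-variable, the integer ambiguity drops out upon differentiation and
$$
\frac{d\Hol^a(\bar\gamma_s)}{ds}\Big|_{s=0}=\int_0^1\beta^a(\dot\gamma(t),\delta\gamma(t))\,dt\cdot\frac{\partial}{\partial y^a},
$$
which is the asserted formula, using $\partial_t\Gamma|_{s=0}=\dot\gamma$ and $\partial_s\Gamma|_{s=0}=\delta\gamma$. I expect the main obstacle to lie in the holonomy--curvature step: one must set up a consistent orientation of the variation surface together with a sign convention for the horizontal lift so that the boundary is read off as $\gamma_s\cdot\gamma^{-1}$ and the final integrand appears with the ordering $(\dot\gamma,\delta\gamma)$ and the correct sign, and one must check that passing from the multivalued $\R$-valued primitive to the $S^1$-valued holonomy is compatible with differentiation --- which it is, since the lattice ambiguity is locally constant in $s$.
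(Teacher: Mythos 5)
Your argument is correct, but it follows a genuinely different route from the paper's own proof. The paper proceeds by a bare-hands computation in local trivializations: writing $\theta^a=\alpha_i+dy_i$ on $\rho_a^{-1}(U_i)\cong U_i\times S^1$, it integrates the horizontal-lift equation to obtain the explicit fiber coordinate \eqref{T2}, differentiates in $s$ using the variational formula \eqref{izvod} coming from Lemma 2.1, and then patches the resulting identities \eqref{T3} across charts via the transition relations \eqref{tranzicija}, telescoping along the closed loop $\bar\gamma_s=\gamma_s\cdot c^{-1}$ to isolate the variation of the end fiber coordinate. You instead cancel the fixed arc $c$ by abelian loop composition (using thin-homotopy invariance to delete the backtrack $c^{-1}\cdot c$), recognize the variation surface as a homotopy bounding $\gamma_s\cdot\gamma^{-1}$, and invoke the abelian holonomy--curvature relation plus Stokes, with integrality of $[\beta^a]$ absorbing the $\mathbb{Z}$-ambiguity, which drops out upon differentiation. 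Your version is shorter and makes conceptually visible why only the curvature, and no choice of local connection data, enters; the price is that the holonomy--curvature identity you invoke is itself proved by essentially the chart-by-chart bookkeeping the paper carries out, so your proof outsources the local analysis to a standard fact rather than avoiding it. One small repair: a section of $Y_a$ over the variation surface inside $P$ need not exist, since the image of $\Gamma$ need not lie in a trivializing chart; instead pull the bundle and connection back by $\Gamma$ to the contractible rectangle $[0,1]\times[0,s]$, where a global section $\sigma$ exists and $dA=\Gamma^*\beta^a$ holds for $A=\sigma^*\Gamma^*\theta^a$ --- for this specific bounding surface the holonomy difference is then exactly $\int_{[0,1]\times[0,s]}\Gamma^*\beta^a$, with the mod-$\mathbb{Z}$ ambiguity arising only when comparing different bounding surfaces or sections, which is where integrality is genuinely needed. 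With that adjustment, the sign and orientation caveats you flag are the only remaining bookkeeping, and they come out consistent with the paper's conventions: the formula \eqref{izvod} produces $\int_0^1\beta^a(\dot\gamma,\delta\gamma)\,dt$ with precisely the ordering appearing in the statement.
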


Consider the equation \eqref{2}, where $\partial H/\partial t=0$.
Let $M$ be a regular component of $H^{-1}(h)$ and
$\Omega^h(x_0,x_1)$ be a space of regular paths $\gamma: [0,1]\to
M$ that connect points $x_1$ and $x_2$.

For every $\gamma\in\Omega^h(x_0,x_1)$ define a pease-wise smooth,
closed path $\bar\gamma=\gamma\cdot c^{-1}: [0,2]\to M$, where
$c\in\Omega^h(x_0,x_1)$ is fixed. We call
$$ A_{\mathbb T^n}: \,\Omega^h(x_0,x_1)
\longrightarrow \mathbb T^n, \qquad \gamma \longmapsto
(\Hol^1(\bar\gamma),\dots,\Hol^n(\bar\gamma))
$$
a {\it torus valued reduced action}. From Lemma \ref{TU} we have:
$$
\lambda\left(\frac{d}{ds}A_{\mathbb
T^n}(\gamma_s)\vert_{s=0}\right)=\sum_{a=1}^n \mu_a \int_0^1
\beta^a (\dot \gamma(t),\delta\gamma(t))dt= \int_0^1 \omega (\dot
\gamma(t),\delta\gamma(t))dt\,.
$$

Whence, we obtain the following principle of stationary
isoenergetic action on the non-exact symplectic manifolds.

\begin{thm}
A curve $\gamma\in\Omega^h(x_0,x_1)$ is an integral curve of the
characteristic line bundle $\mathcal L_M$ if and only if
$$
\frac{d}{ds}\left(\Lambda \circ A_{\mathbb
T^n}(\gamma_s)\right)\vert_{s=0}=
\lambda\left(\frac{d}{ds}A_{\mathbb
T^n}(\gamma_s)\vert_{s=0}\right)=0
$$
for all variations $\gamma_s\in\Omega^h(x_0,x_1)$.
\end{thm}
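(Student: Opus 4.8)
The plan is to reduce Theorem 3.3 to the variational computation already displayed just before its statement, combined with the fundamental lemma of the calculus of variations, in exactly the spirit of the proof of Theorem 3.1. The two equalities in the statement are then established separately: the first is a chain-rule identity, the second is the given computation, and the actual content of the theorem is the resulting characterization of the kernel of $\omega\vert_M$.

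First I would justify the first equality. Although $\Lambda$ is only multi-valued on $\mathbb{T}^n$, its differential $d\Lambda=\lambda$ is a globally well-defined $1$-form. By Lemma \ref{TU} the holonomies $\Hol^a(\bar\gamma_s)$ depend smoothly on $s$, so $s\mapsto A_{\mathbb{T}^n}(\gamma_s)$ is a smooth path in $\mathbb{T}^n$ and the chain rule gives
$$
\frac{d}{ds}\left(\Lambda\circ A_{\mathbb{T}^n}(\gamma_s)\right)\Big\vert_{s=0}=\lambda\left(\frac{d}{ds}A_{\mathbb{T}^n}(\gamma_s)\Big\vert_{s=0}\right),
$$
the ambiguity in $\Lambda$ disappearing upon differentiation. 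The second equality is precisely the computation carried out above the theorem, which uses Lemma \ref{TU} together with the decomposition $\omega=\sum_a\mu_a\beta^a$ to identify this quantity with $\int_0^1\omega(\dot\gamma(t),\delta\gamma(t))\,dt$. Thus the whole statement reduces to showing that this integral vanishes for every variation $\gamma_s\in\Omega^h(x_0,x_1)$ if and only if $\dot\gamma(t)$ is a section of $\ker(\omega\vert_M)=\mathcal L_M$.

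The forward implication is immediate: if $\gamma$ is an integral curve of $\mathcal L_M$, then $\omega(\dot\gamma(t),\xi)=0$ for every $\xi\in T_{\gamma(t)}M$, and since $\delta\gamma(t)\in T_{\gamma(t)}M$ for any variation within $M$, the integrand vanishes identically. For the converse I would argue by contradiction using the fundamental lemma. Suppose $\dot\gamma(t_0)\notin\ker(\omega\vert_M)$ for some $t_0\in(0,1)$; choose $v\in T_{\gamma(t_0)}M$ with $\omega(\dot\gamma(t_0),v)\ne 0$. Working in a chart of $M$ about $\gamma(t_0)$, pick a bump function $\phi$ supported in a small interval around $t_0$ and vanishing at the endpoints, and let $\tilde v(t)$ be a local extension of $v$; the variation $\gamma_s(t)=\gamma(t)+s\,\phi(t)\tilde v(t)$ then stays in $M$, has fixed endpoints, and realizes $\delta\gamma(t)=\phi(t)\tilde v(t)$. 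For a sufficiently small support the integrand keeps the sign of $\omega(\dot\gamma(t_0),v)$, so the integral is nonzero, contradicting the hypothesis. Hence $\dot\gamma(t)\in\mathcal L_M$ for all $t$, which completes the equivalence.

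The only point requiring genuine care is this last construction of admissible variations tangent to $M$ with fixed endpoints, since $\delta\gamma$ must range over a large enough family of sections of $TM$ along $\gamma$ to detect the full kernel condition; but as $M$ is a smooth submanifold and $\gamma$ lies in $M$, working in a chart of $M$ (or via the exponential map of an auxiliary metric on $M$) makes this routine, and it poses no real obstacle. Everything else is a direct assembly of the chain-rule identity, the displayed computation, and this standard variational argument.
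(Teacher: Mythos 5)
Your proposal is correct and takes essentially the same route as the paper: the paper proves this theorem precisely by the displayed computation based on Lemma \ref{TU} (whose proof it supplies separately), identifying $\lambda\bigl(\frac{d}{ds}A_{\mathbb T^n}(\gamma_s)\vert_{s=0}\bigr)$ with $\int_0^1\omega(\dot\gamma(t),\delta\gamma(t))\,dt$, and then concluding, exactly as in its proofs of Theorems 2.1 and 3.1, that this vanishes for all variations if and only if $\dot\gamma$ is a section of $\ker\omega\vert_M=\mathcal L_M$. Your only addition is to make explicit the chain-rule remark about the multi-valued $\Lambda$ and the standard bump-function construction of admissible variations within $M$, both of which the paper leaves implicit.
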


For the completeness of the exposition we include:

\begin{proof}[Proof of Lemma \ref{TU}]
In local trivializations
$$
\rho^{-1}(U_i) \cong U_i\times S^1(x_i,y_i\, \mathrm{mod}\,2\pi),
$$
we have local connection 1-forms $\alpha_i$ on $U_i$ such that
$\theta=\alpha_i+dy_i$ (the index $a$ is omitted). The transition
functions between fiber coordinates and connection 1-forms are
given by
\begin{equation}\label{tranzicija}
y_j=y_i+g_{ij}(x), \qquad \alpha_i=\alpha_j+dg_{ij}, \qquad
g^a_{ij}: U_i \cap U_j \to S^1.
\end{equation}
On the other hand, the curvature 2-form is invariant:
$\beta=d\alpha_i=d\alpha_j$.

Suppose $ \gamma_s([t_0,t_1])\subset U_i$, $s\in [0,\epsilon]. $
The local expression for $\tilde\gamma_s$ reads
$$
\tilde\gamma_s(t)=(\gamma_s(t),y_i(t,s)), \qquad
\theta_i\left(\frac{d}{dt}\tilde\gamma(t)\right)=0
\,\,\Longleftrightarrow\,\, \alpha_i(\dot\gamma(t))+\dot
y_i(t,s)=0.
$$
Therefore
\begin{equation}\label{T2}
y_i(t_1,s)=y_i(t_0,s)-\int_{t_0}^{t_1}
\alpha_i(\dot\gamma_s(t))dt.
\end{equation}

By taking the differential of \eqref{T2} at $s=0$ and applying
\eqref{izvod} we get
\begin{equation}\label{T3}
 \delta y_i(t_1)+\alpha_i(\delta\gamma(t_1))=\delta y_i(t_0)+\alpha_i(\delta\gamma(t_0))+\int_{t_0}^{t_1}
\beta(\dot\gamma(t),\delta\gamma(t))dt,
\end{equation}
where $\delta_i y(t)=\frac{d}{ds}y_i(t,s)\vert_{s=0}$,
$\delta\gamma(t)=\frac{d}{ds}\gamma_s(t)\vert_{s=0}$.

Now, assume $t_0<t'_0<t_1<t'_1$, $\gamma_s([t_0,t_1])\subset U_i$
and $\gamma_s([t'_0,t'_1])\subset U_j$. The transformations
\eqref{tranzicija} imply
\begin{equation}\label{T4}
\delta y_i(t)+\alpha_i(\delta\gamma(t))=\delta
y_j(t)+\alpha_j(\delta\gamma(t)), \qquad t\in [t'_0,t_1].
\end{equation}

By combining \eqref{T3} and  \eqref{T4}, it follows
\begin{equation}\label{T6}
 \delta y_j(t'_1)+\alpha_j(\delta\gamma(t'_1))=\delta
y_i(t_0)+\alpha_i(\delta\gamma(t_0))+\int_{t_0}^{t'_1}
\beta(\dot\gamma(t),\delta\gamma(t))dt.
\end{equation}

Let $\bar\gamma_s=\gamma_s\cdot c^{-1}: [0,2]\to M$ and let
$U_1,\dots,U_l$ be local charts, such that
$$
\bar\gamma_s([t_{i-1},t_i])\subset U_i,  \quad
0=t_0<t_1<\dots<t_k=1<t_{k+1}<\dots<t_l=2, \quad s\in
[0,\epsilon].
$$

From the relation \eqref{T6} and  $\delta
\gamma(0)=\delta\gamma(1)=0=\delta \bar\gamma(t)=0$, $t\in [1,2]$,
we get
\begin{equation*}
 \delta \bar y_k(1)-\delta
y_1(0)=\int_{0}^{1} \beta(\dot\gamma(t),\delta\gamma(t))dt, \qquad
\delta \bar y_l(2)-\delta\bar y_k(1)=0.
\end{equation*}
We can suppose that the horizontal lifts of all curves
 start from the same
point in $Y$. Then $\delta y_1(0)=0$. This proves the statement.
\end{proof}

\subsection{Reduced action for magnetic flows}

Let us return to the magnetic equations \eqref{magnetic_flow},
where $H(q,p)$ is an arbitrary smooth function and $\sigma$ is not
exact. Let $M$ be a regular component of $H(q,p)^{-1}(h)$ and let
$\pi: T^*Q\to Q$ be the natural projection.

 As in Theorem
2.2, we not need to fix endpoints in the fiber directions.
 Consider a class
of regular curves $\gamma$ lying on $M$ and connecting the
subspaces $T^*_{q_0} Q$ and $T^*_{q_1} Q$, such that the
projection $\pi(\gamma)$ is homotopic to $c$:
$$
\Omega_c^h(q_0,q_1)=\{\gamma: [0,1]\to M \,\vert\,
\pi(\gamma(0))=q_0,\, \pi(\gamma(1))=q_1, \pi(\gamma) \sim c\},
$$
and a class of all regular paths connecting $T^*_{q_0}Q$ and
$T^*_{q_1}Q$ and laying in $M$:
\begin{equation*}\label{putevi*}
\Omega^h(q_0,q_1)=\bigcup_c \Omega_{c}^h(q_0,q_1),
\end{equation*}
where we take representatives $c: [0,1] \to Q$ for all
non-homotopic paths connecting $q_0$ and $q_1$.

\begin{thm}
Assume $\sigma\vert_{\pi_2(Q)}=0$. The phase trajectories of the
magnetic equations \eqref{magnetic_flow} in the class of curves
$\Omega_c^h(q_0,q_1)$ are extremals of the reduced action
\begin{equation*}
A: \Omega^h(q_0,q_1)\to \R, \quad
A(\gamma)\vert_{\Omega^h_c(q_0,q_1)}=\int_\gamma p\,dq + \int_D
f^*_\gamma \sigma,
\end{equation*}
where $f_\gamma: D\to Q$ is smooth for $\vert z \vert <1$,
continuous on $D$ and
$$
\pi(\gamma(t))=f_\gamma(\exp(\sqrt{-1}\pi t)), \quad
c(t)=f_\gamma(\exp(\sqrt{-1}\pi (2-t)), \quad t\in [0,1].
$$
\end{thm}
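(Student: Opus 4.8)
The plan is to exploit the decomposition $\omega = dp\wedge dq + \pi^*\sigma = d(p\,dq) + \pi^*\sigma$ of the twisted symplectic form, so that the two summands of the proposed action $A(\gamma) = \int_\gamma p\,dq + \int_D f_\gamma^*\sigma$ account separately for the exact and the non-exact parts of $\omega$. The key point is that the exact part carries a \emph{global} primitive $p\,dq$ on $T^*Q$, so the first term needs no disk; only the genuinely non-exact part $\pi^*\sigma$ must be treated by a spanning disk, and since $\pi: T^*Q\to Q$ is a homotopy equivalence we may push that disk down into the base $Q$. This is why $f_\gamma$ takes values in $Q$ and why the relevant asphericity hypothesis is $\sigma|_{\pi_2(Q)} = 0$ rather than a condition on $T^*Q$ itself.

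First I would check that $A$ is single-valued on each class $\Omega_c^h(q_0,q_1)$. The term $\int_\gamma p\,dq$ is manifestly well defined. For $\int_D f_\gamma^*\sigma$, two admissible fillings $f_\gamma, f'_\gamma$ of the loop $\pi(\gamma)\cdot c^{-1}$ glue along their common boundary to a $2$-sphere in $Q$ whose class lies in $\pi_2(Q)$; since $\sigma$ is closed (being the difference $\omega - dp\wedge dq$ of closed forms) and $\sigma|_{\pi_2(Q)} = 0$, the two disk integrals coincide, exactly as in the discussion preceding Theorem 3.1.

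Next I would compute the first variation along a variation $\gamma_s = \Gamma(\cdot,s)$ lying in $M$, keeping $\pi(\gamma_s(0)) = q_0$, $\pi(\gamma_s(1)) = q_1$ and the homotopy class of $\pi(\gamma_s)$ fixed. For the exact term I apply Lemma 2.1 in the form \eqref{izvod} with $\alpha = p\,dq$:
\begin{equation*}
\frac{d}{ds}\Big|_{s=0}\int_{\gamma_s} p\,dq = \int_0^1 (dp\wedge dq)(\delta\gamma,\dot\gamma)\,dt + (p\,dq)(\delta\gamma(1)) - (p\,dq)(\delta\gamma(0)).
\end{equation*}
Because the endpoints are only constrained to lie over $q_0$ and $q_1$, the vectors $\delta\gamma(0),\delta\gamma(1)$ are vertical, so $dq(\delta\gamma(0)) = dq(\delta\gamma(1)) = 0$ and both boundary terms vanish; this is precisely the mechanism already used in Theorem 2.2 that lets us connect the fibres $T^*_{q_0}Q$ and $T^*_{q_1}Q$ rather than fixed phase-space points. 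For the disk term I repeat the argument of Theorem 3.1: since $\sigma$ is closed, the difference $\int_D f_{\gamma_\epsilon}^*\sigma - \int_D f_\gamma^*\sigma$ equals the integral of $\sigma$ over the cylinder swept out by $\pi\circ\Gamma$, and differentiating at $s=0$ gives $\int_0^1 (\pi^*\sigma)(\delta\gamma,\dot\gamma)\,dt$.

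Adding the two contributions yields
\begin{equation*}
\frac{d}{ds}\Big|_{s=0} A(\gamma_s) = \int_0^1 \big(dp\wedge dq + \pi^*\sigma\big)(\delta\gamma,\dot\gamma)\,dt = \int_0^1 \omega(\delta\gamma,\dot\gamma)\,dt,
\end{equation*}
which vanishes for every admissible $\delta\gamma$ if and only if $\dot\gamma(t)$ lies in the kernel of $\omega|_M$, i.e.\ $\gamma$ is an integral curve of $\mathcal{L}_M$ --- equivalently, since $X_H$ spans $\mathcal{L}_M$, a phase trajectory of \eqref{magnetic_flow}. I expect the main obstacle to be the careful bookkeeping in the disk-variation step: one must verify that moving only the $\pi(\gamma_s)$ portion of the boundary, with $c$ fixed, produces the cylinder $\pi\circ\Gamma$ with the correct orientation, so that the sign of the resulting $\int_0^1 (\pi^*\sigma)(\delta\gamma,\dot\gamma)\,dt$ matches the exact term and the two combine into $\omega$. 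The vanishing of the $p\,dq$ boundary terms and the transfer of asphericity through $\pi$ are comparatively routine.
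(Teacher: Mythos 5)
Your proof is correct and follows essentially the route the paper intends: the paper states this theorem without a separate proof precisely because it combines the disk-variation argument of Theorem 3.1 (applied to $\pi^*\sigma$ pushed down to $Q$, with well-definedness from $\sigma\vert_{\pi_2(Q)}=0$) with the boundary-term mechanism of Theorem 2.2, where the verticality of $\delta\gamma(0)$, $\delta\gamma(1)$ kills the $p\,dq$ endpoint contributions in \eqref{izvod}. Your bookkeeping of the two contributions summing to $\int_0^1\omega(\delta\gamma,\dot\gamma)\,dt$ with $\omega=dp\wedge dq+\pi^*\sigma$ matches the paper's conventions, so nothing further is needed.
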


If $\sigma\vert_{\pi_2(Q)}\ne 0$, we can use a combination of the
usual reduced action and a torus valued action with respect to the
form $\sigma$. Suppose
$$
\sigma=\sum_{a=1}^n \mu_a \beta^a,
$$
where $\beta^a$ are 2-forms, representing integrals cohomology
classes in $Q$. We take decomposition with minimal $n$. As above,
to $\sigma$ we associate principal $S^1$-bundles $L_a$ over $Q$
having the connections $\theta^a$ with curvature forms $\beta^a$,
$a=1,\dots,n$.

Let us fix  $c: [0,1]\to Q$, $c(0)=q_0$, $c(1)=q_1$. For every
$\gamma\in\Omega^h(q_0,q_1)$, we associate a pease-wise smooth,
closed path $\underline{\gamma}=\pi(\gamma)\cdot c^{-1}: [0,2]\to
Q$. Define
$$
B_{\mathbb T^n}: \,\Omega^h(q_0,q_1) \longrightarrow \mathbb T^n,
\qquad \gamma \longmapsto
(\Hol^1(\underline{\gamma}),\dots,\Hol^n(\underline{\gamma})),
$$
where $\Hol^a$ is the holonomy of the bundle $L_a\to Q$.  Let $
\upsilon=\sum_{a=1}^n \mu_a dy^a $ be a 1-form on $\mathbb T^n$,
considered as a differential of a multi-valued function
$\Upsilon$: $d\Upsilon=\upsilon$.

\begin{thm}
A curve $\gamma\in\Omega^h(q_0,q_1)$ is an integral curve of the
characteristic line bundle $\mathcal L_M$ if and only if
$$
\frac{d}{ds}\left(\int_{\gamma_s} p\,dq-\Upsilon\circ B_{\mathbb
T^n}(\gamma_s)\right)\vert_{s=0}=0
$$
for all variations $\gamma_s\in\Omega^h(q_0,q_1)$.
\end{thm}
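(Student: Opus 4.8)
The plan is to mirror the computation that precedes Theorem 3.3, splitting the derivative into a canonical part coming from $p\,dq$ and a holonomy part coming from $\Upsilon\circ B_{\mathbb T^n}$, and then showing that their difference exactly reassembles the twisted form $\omega=dp\wedge dq+\pi^*\sigma$.

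First I would differentiate the canonical action. Applying the variation formula \eqref{izvod} with $\alpha=p\,dq$, $d\alpha=dp\wedge dq$, gives
$$
\frac{d}{ds}\Big(\int_{\gamma_s} p\,dq\Big)\Big|_{s=0}=\int_0^1 (dp\wedge dq)(\delta\gamma,\dot\gamma)\,dt+(p\,dq)(\delta\gamma(1))-(p\,dq)(\delta\gamma(0)).
$$
The key observation, exactly as in Theorem 2.2, is that curves in $\Omega^h(q_0,q_1)$ keep only the projections $\pi(\gamma(0))=q_0$ and $\pi(\gamma(1))=q_1$ fixed, so the endpoint variations $\delta\gamma(0)$ and $\delta\gamma(1)$ are vertical (tangent to the fibers $T^*_{q_0}Q$ and $T^*_{q_1}Q$). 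Since the canonical form $p\,dq$ annihilates vertical vectors, both boundary terms vanish and only the bulk integral survives.

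Next I would treat the holonomy term. Because $\Upsilon$ is a multivalued primitive of $\upsilon=\sum_a\mu_a\,dy^a$, its variation equals $\upsilon$ applied to the derivative of $B_{\mathbb T^n}$. I would invoke Lemma \ref{TU} separately for each bundle $L_a\to Q$, applied to the projected loop $\underline{\gamma}_s=\pi(\gamma_s)\cdot c^{-1}$ (the fixed arc $c$ contributing nothing to the variation). Since the holonomies live over $Q$, the lemma returns $\int_0^1\beta^a(d\pi\,\dot\gamma,d\pi\,\delta\gamma)\,dt=\int_0^1(\pi^*\beta^a)(\dot\gamma,\delta\gamma)\,dt$, and summing against $\mu_a$ with $\sigma=\sum_a\mu_a\beta^a$ yields
$$
\frac{d}{ds}\big(\Upsilon\circ B_{\mathbb T^n}(\gamma_s)\big)\Big|_{s=0}=\sum_{a=1}^n \mu_a\int_0^1 (\pi^*\beta^a)(\dot\gamma,\delta\gamma)\,dt=\int_0^1 (\pi^*\sigma)(\dot\gamma,\delta\gamma)\,dt.
$$
Subtracting this from the canonical part and using the antisymmetry $(\pi^*\sigma)(\dot\gamma,\delta\gamma)=-(\pi^*\sigma)(\delta\gamma,\dot\gamma)$, I obtain
$$
\frac{d}{ds}\Big(\int_{\gamma_s} p\,dq-\Upsilon\circ B_{\mathbb T^n}(\gamma_s)\Big)\Big|_{s=0}=\int_0^1 (dp\wedge dq+\pi^*\sigma)(\delta\gamma,\dot\gamma)\,dt=\int_0^1 \omega(\delta\gamma,\dot\gamma)\,dt.
$$
As in the proof of Theorem 3.1, this expression vanishes for all admissible variations $\gamma_s\in\Omega^h(q_0,q_1)$ precisely when $\dot\gamma(t)\in\ker(\omega|_M)$ for every $t$, i.e. when $\gamma$ is an integral curve of $\mathcal L_M$, which is the assertion.

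I expect the one genuinely delicate point to be the bookkeeping in the holonomy step: Lemma \ref{TU} must be applied on the base $Q$ rather than on $T^*Q$, its curvature forms $\beta^a$ must be recognized as the pullbacks $\pi^*\beta^a$ once evaluated on the lifted velocities, and the orientation of $\underline{\gamma}$ must be tracked so that only the $\pi(\gamma_s)$ arc over $[0,1]$ contributes. The vanishing of the canonical boundary terms is the other place where the free-fiber endpoint condition enters, and it is worth stating explicitly, since this is exactly what permits leaving the fiber directions unconstrained.
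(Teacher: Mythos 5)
Your proposal is correct and is essentially the paper's own argument: the paper states Theorem 3.3 without a separate proof precisely because it is the combination you spell out, namely the variation formula \eqref{izvod} applied to $p\,dq$ with the boundary terms killed by the vertical endpoint variations (as in Theorem 2.2), plus Lemma \ref{TU} applied to the projected loops $\pi(\gamma_s)\cdot c^{-1}$ on the base $Q$, so that the derivative reassembles $\int_0^1\omega(\delta\gamma,\dot\gamma)\,dt$ with $\omega=dp\wedge dq+\pi^*\sigma$. Your sign bookkeeping (the minus sign in front of $\Upsilon\circ B_{\mathbb T^n}$ matching the orientation in Lemma \ref{TU}) and your observation that the fixed projections $\pi(\gamma_s(0))=q_0$, $\pi(\gamma_s(1))=q_1$ are what legitimize the holonomy lemma are exactly the delicate points, and you handle them correctly.
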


\begin{rem}
For various approaches to the existence problem of closed magnetic
orbits, see \cite{CMP, Ta} and references therein. Integrable
magnetic geodesic flows on homogeneous spaces can be found in
\cite{BJ2}.
\end{rem}

\section{Isoenergetic hypersurfaces of contact type}

\subsection{}
A {\it contact form} $\alpha$ on a $(2n+1)$-dimensional manifold
$M$ is a Pfaffian form satisfying $\alpha\wedge(d\alpha)^n\ne0$.
By a {\it contact manifold} $(M,\mathcal H)$ we mean a connected
$(2n+1)$-dimensional manifold $M$ equipped with a nonintegrable
{\it contact} (or {\it horizontal}) {\it distribution} $\mathcal
H$, locally defined by a contact form: $\mathcal
H\vert_U=\ker\alpha\vert_U$, $U$ is an open set in $M$ \cite{LM}.
A contact manifold $(M,\mathcal H)$ is {\it co-oriented (or
stricly) contact} if $\mathcal H$ is defined by a global  contact
form $\alpha$. For a given contact form $\alpha$, the {\it Reeb
vector field} $Z$ is a vector field uniquely defined by
$$
i_Z\alpha=1, \qquad i_Z d\alpha=0.
$$

\subsection{}
In studying  the existence problem of closed Hamiltonian
trajectories on a fixed isoenergetic surface, Weinstein introduced
the following concept \cite{We}. An orientable hypersurface $M$ of
a symplectic manifold $(P,\omega)$ is of {\it contact type} if
there exist a 1-form $\alpha$ on $M$ satisfying
$$
d\alpha=j^*\omega, \quad \alpha(\xi)\ne 0, \, \xi\in \mathcal L_M,
\, \xi\ne 0
$$
where $j: M\to P$ is the inclusion. If $(M,\alpha)$ is of contact
type, since $\mathcal L=\ker\omega_M$, the kernel of $\alpha$
$$
\mathcal H=\{\xi\in T_x M\, \vert\, \alpha(\xi)=0, \, x\in M\}
$$
is a $(2n-2)$-dimensional nonintegrable distribution on which
$d\alpha=\omega$ is nondegenerate. Consequently, $\alpha\wedge
d\alpha^{n-1}$ is a volume form on $M$ and $(M,\mathcal H)$ is a
co-oriented contact manifold.

Now, let $(P,\omega=d\alpha)$ be an exact symplectic manifold.
Consider a regular component $M$ of an isoenergetic surface
$H^{-1}(h)$ ($H$ does not depend on time). If
$\alpha(X_H)\vert_M\ne 0$ then $M$ is of contact type. We say that
$M$ is of {\it contact type with respect to $\alpha$}.

If $M$ is of contact type with respect to $\alpha$ then $\alpha$
has no zeros in some open neighborhood of $M$. Contrary, suppose
that an 1-form $\alpha$ has no zeros in some open neighborhood of
$M$. Then, from the nondegeneracy of $\omega$,  there exists a
unique vector field $E$ such that

\begin{equation}\label{liouville}
i_{E} \omega=\alpha.
\end{equation}
The vector field $E$ has no zeros. From Cartan's formula, the
condition $i_E \omega=\alpha$ is equivalent to $ L_E\omega=\omega,
$ i.e., $E$ is the {\it Liouville vector field} of $\omega$. We
have (e.g., see Libermann and Marle \cite{LM}):

\begin{lem}
A regular connected component $M$ of an isoenergetic surface
$H^{-1}(h)$ is of contact type with respect to $\alpha$ if and
only if the Liouville vector field defined by \eqref{liouville} is
transverse to $M$.
\end{lem}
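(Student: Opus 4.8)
The plan is to collapse the entire equivalence onto a single pointwise identity, namely that the derivative of $H$ along the Liouville field $E$ equals $\alpha(X_H)$, and then observe that both sides of the claimed equivalence are nothing but the nonvanishing of that one function on $M$. So first I would unwind the two notions separately. By the definition introduced just before the lemma, ``$M$ is of contact type with respect to $\alpha$'' means precisely $\alpha(X_H)\vert_M\ne 0$. On the other side, since $M$ is a regular component of $H^{-1}(h)$ we have $T_x M=\ker dH_x$ for every $x\in M$, and $M$ is a hypersurface; hence a vector $E_x$ is transverse to $M$ at $x$ exactly when $E_x\notin T_x M$, that is, when $dH_x(E_x)\ne 0$. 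Thus transversality of $E$ to $M$ is equivalent to $dH(E)\vert_M\ne 0$, and the lemma will follow once I match this against $\alpha(X_H)\vert_M\ne 0$.

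The heart of the argument is the identity $dH(E)=\alpha(X_H)$, which I would read off directly from the two defining relations $i_E\omega=\alpha$ (equation \eqref{liouville}) and $i_{X_H}\omega=-dH$ (equation \eqref{Hamiltonian}). Evaluating the first relation on the vector $X_H$ gives $\alpha(X_H)=\omega(E,X_H)$, while evaluating the second on $E$ gives $dH(E)=-\omega(X_H,E)$. By the antisymmetry of $\omega$ one has $\omega(X_H,E)=-\omega(E,X_H)$, so the two right-hand sides coincide and $dH(E)=\alpha(X_H)$ holds as an identity of functions (on the neighbourhood of $M$ where $E$ is defined, hence in particular on $M$). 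Note this step uses only the nondegeneracy of $\omega$ through the definitions of $X_H$ and $E$, not the exactness $\omega=d\alpha$.

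With the identity established the conclusion is immediate: $M$ is of contact type with respect to $\alpha$ means $\alpha(X_H)\vert_M\ne 0$, which by the identity is the same as $dH(E)\vert_M\ne 0$, and this is exactly transversality of $E$ to $M$. I do not expect a genuine obstacle here; the only point demanding a moment's care is the sign convention, namely that $X_H$ is defined with $i_{X_H}\omega=-dH$ whereas $E$ is defined with $i_E\omega=+\alpha$, so that it is precisely the antisymmetry of $\omega$ that reconciles $dH(E)=-\omega(X_H,E)$ with $\alpha(X_H)=\omega(E,X_H)$ without introducing a spurious minus sign.
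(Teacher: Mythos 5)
Your proof is correct, but it follows a genuinely different route from the paper's. The paper argues at the level of top-degree forms: from $i_E\omega = \alpha$ and $\omega = d\alpha$ it gets $i_E\omega^{n} = n\,\alpha\wedge (d\alpha)^{n-1}$, concludes that the kernel of the $(2n-1)$-form $\alpha\wedge (d\alpha)^{n-1}$ is the line bundle spanned by $E$, and hence that $\alpha\wedge (d\alpha)^{n-1}\vert_M$ is a volume form at $x$ exactly when $E(x)\notin T_xM$; this verifies contactness in the intrinsic form of Weinstein's definition ($\alpha\vert_M$ is a contact form on $M$) and never mentions $X_H$ or $H$ in the key identity. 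You instead collapse everything onto the scalar identity $E(H)=\alpha(X_H)$ — which is precisely the paper's equation \eqref{pomoc}, though the paper only establishes it later, inside the proof of Lemma \ref{jakobijeva} — and match it against the paper's working definition of contact type with respect to $\alpha$, namely $\alpha(X_H)\vert_M\ne 0$, together with the observation that transversality of $E$ to the hypersurface $M$ means $dH(E)\vert_M\ne 0$ since $T_xM=\ker dH_x$ on a regular component. Your sign bookkeeping with the conventions $i_{X_H}\omega=-dH$ and $i_E\omega=+\alpha$ is right, and the two formulations of contactness you implicitly identify are indeed equivalent here, since $\alpha\wedge(d\alpha)^{n-1}\vert_M$ is a volume form iff $\alpha$ is nonvanishing on the kernel $\mathcal L_M$ of $d\alpha\vert_M$, which is spanned by $X_H\vert_M$. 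What each approach buys: yours is more elementary (pointwise symplectic linear algebra, no top exterior powers) and, as you correctly note, the identity $E(H)=\alpha(X_H)$ uses only nondegeneracy of $\omega$, not exactness; the paper's computation yields the stronger geometric fact that $E$ spans $\ker\bigl(\alpha\wedge(d\alpha)^{n-1}\bigr)$ wherever $\alpha\ne 0$, so it characterizes contactness of $\alpha\vert_M$ for an arbitrary hypersurface independently of the particular Hamiltonian cutting it out — consonant with the paper's earlier remark that $\mathcal L_M$ depends only on $M$ and not on $H$.
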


\begin{proof}
Since  $i_E\omega^{n}=n\alpha\wedge d\alpha^{n-1}$, the kernel of
$\alpha\wedge d\alpha^{n-1}$ is the vector bundle generated by
$E$. Therefore $\alpha\wedge d\alpha^{n-1}\vert_M$ is a volume
form on $M$ at $x$ if and only if $E(x)\notin T_x M$.
\end{proof}

Let $M$ be of contact type with respect to $\alpha$ and let $Z$ be
the corresponding Reeb vector field on $M$:
$$
i_Z d\alpha\vert_M =0, \qquad \alpha(Z)=1.
$$

Since $Z$ is a section of $\ker d\alpha\vert_M$, it is
proportional to $X_H\vert_M$: $Z=\mathcal N X_H\vert_M$, $\mathcal
N\ne 0$. Consequently, the flow of $Z$ can be seen as a flow of
$X_H\vert_M$ after a time reparametrization $dt=\mathcal N d\tau$:
\begin{equation}\label{TR1}
\frac{d x}{d\tau}=\frac{dx}{dt}\frac{dt}{d\tau}=X_H(x)\cdot
\mathcal N(x)=Z(x), \qquad x\in M.
\end{equation}
Alternatively, we can change the Hamiltonian $H$. Extend $\mathcal
N$ to a neighborhood of $M$. Then
\begin{equation}\label{TR2}
X_{\mathcal N(H-h)}(x)=\mathcal N(x) X_H(x), \qquad x\in M.
\end{equation}

Based on the observations \eqref{TR1}, \eqref{TR2}, we have the
following statement.

\begin{lem}\label{jakobijeva}
The function
$$
H_0=\frac{H-h}{E(H)}
$$
 has $M$ as an
invariant surface and the Hamiltonian vector field
$X_{H_0}\vert_M$ is equal to the Reeb field $Z$. If $\rho$ is any
smooth function of a real variable, such that $\rho'(\lambda)=1$,
then $\rho(H_0+\lambda)$ has the same property. In particular, for
$\rho(x)=-1/(4x)$, $\lambda=-1/2$, we get
\begin{equation}\label{Jacobi-transf}
H_J=\frac{E(H)}{4h-4H+2E(H)}, \quad H_J\vert_M=\frac12, \quad
Z=X_{H_J}\vert_M.
\end{equation}
\end{lem}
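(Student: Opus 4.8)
The plan is to reduce everything to one algebraic identity, namely that the Liouville vector field $E$ of \eqref{liouville} computes the reparametrization factor $\mathcal N$ appearing in \eqref{TR1}. First I would show that $E(H)=\alpha(X_H)$. Indeed, by definition $E(H)=dH(E)$, and from \eqref{Hamiltonian} we have $dH=-i_{X_H}\omega$, so $E(H)=-\omega(X_H,E)=\omega(E,X_H)$; but $i_E\omega=\alpha$ by \eqref{liouville}, whence $\omega(E,X_H)=\alpha(X_H)$. Since $M$ is of contact type with respect to $\alpha$, we have $\alpha(X_H)\vert_M\ne 0$, so $E(H)\vert_M\ne 0$ and by continuity $E(H)\ne 0$ in a neighborhood of $M$; thus $H_0=(H-h)/E(H)$ is a well-defined smooth function there. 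Comparing with $\alpha(Z)=1$ and $Z=\mathcal N X_H\vert_M$ gives $\mathcal N=1/\alpha(X_H)=1/E(H)$ on $M$.

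Next I would differentiate $H_0$ using the elementary rules $X_{uv}=uX_v+vX_u$ and $X_{\phi(g)}=\phi'(g)X_g$ for Hamiltonian vector fields. Writing $H_0=(H-h)\cdot E(H)^{-1}$ and using $X_{H-h}=X_H$, one gets $X_{H_0}=E(H)^{-1}X_H-(H-h)E(H)^{-2}X_{E(H)}$. Restricting to $M$, where $H-h=0$, the second term drops out and the first paragraph gives $X_{H_0}\vert_M=E(H)^{-1}X_H\vert_M=\mathcal N X_H\vert_M=Z$, which is the asserted identity. Invariance of $M$ follows because $dH_0\vert_M=E(H)^{-1}dH\vert_M\ne 0$ (the level value being $H_0\vert_M=0$), so $M$ is a regular component of $H_0^{-1}(0)$ and is therefore preserved by $X_{H_0}$.

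For the final part I would treat $\rho(H_0+\lambda)$ in the same way. By the chain rule $X_{\rho(H_0+\lambda)}=\rho'(H_0+\lambda)\,X_{H_0}$, and on $M$ one has $H_0=0$, so $X_{\rho(H_0+\lambda)}\vert_M=\rho'(\lambda)\,X_{H_0}\vert_M=\rho'(\lambda)\,Z$; this equals $Z$ precisely when $\rho'(\lambda)=1$, while $M$ remains a regular level set since $d(\rho(H_0+\lambda))\vert_M=\rho'(\lambda)\,dH_0\vert_M\ne 0$. The displayed special case is then a substitution: with $\rho(x)=-1/(4x)$ one has $\rho'(x)=1/(4x^2)$, so $\rho'(-1/2)=1$; and $\rho(H_0-1/2)=-1/(4H_0-2)=E(H)/(4h-4H+2E(H))=H_J$ after clearing denominators, with $H_J\vert_M=\rho(-1/2)=1/2$ and hence $X_{H_J}\vert_M=Z$ by the previous line.

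The only real content is the identity $E(H)=\alpha(X_H)$ of the first paragraph; once it is in place, the rest is the product and chain rules together with the vanishing of $H-h$ on $M$. I expect no genuine obstacle: the main point to be careful about is that $H_0$ (and the substituted $H_J$) be defined on a full neighborhood of $M$, which is guaranteed by $E(H)\ne 0$ there, and that the denominator $4h-4H+2E(H)$ of $H_J$ not vanish near $M$, where it equals $2E(H)\ne 0$ on $M$ itself.
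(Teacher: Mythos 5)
Your proposal is correct and is essentially the paper's own proof: your opening identity $E(H)=dH(E)=\omega(E,X_H)=\alpha(X_H)$ is exactly the paper's key equation \eqref{pomoc}, and your Leibniz-rule computation of $X_{H_0}$ (the term with $X_{E(H)}$ dying on $M$ because $H-h$ vanishes there) is precisely the paper's observation \eqref{TR2} with $\mathcal N=1/E(H)$. The only cosmetic difference is in the $\rho$ step, where you apply the chain rule $X_{\rho(H_0+\lambda)}=\rho'(H_0+\lambda)X_{H_0}$ directly to the already-proved $X_{H_0}\vert_M=Z$, while the paper instead checks $\rho(H_0+\lambda)\vert_M=\rho(\lambda)$ and $E(\rho(H_0+\lambda))\vert_M=\rho'(\lambda)E(H_0)\vert_M=1$ and invokes the Reeb normalization $\alpha(Z)=1$ --- the same computation in different clothing.
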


\begin{proof}
According to \eqref{Hamiltonian}, \eqref{liouville}, we have
\begin{equation}\label{pomoc}
\alpha(X_F)=\omega(E,X_F)=dF(E)=E(F), \qquad F\in C^\infty(P).
\end{equation}

Thus, $Z=X_H/E(H)\vert_M$, i.e., $\mathcal N=1/E(H)$. It is clear
that $H_0\vert_M=0$, while  \eqref{TR2} implies
$Z=X_{H_0}\vert_M$.

Let $\rho$ is a smooth function, such that $\rho'(\lambda)=1$.
Then $\rho(H_0+\lambda)\vert_M=\rho(\lambda)$ and
$E(\rho(H_0+\lambda))\vert_M=\rho'(\lambda)E(H_0)=1$.
\end{proof}

\subsection{Exact magnetic flows}
Consider a natural mechanical system given by the Hamiltonian
function \eqref{HG}. The canonical 1-form $pdq$ is different from
zero outside the zero section $\{p=0\}$, where we have the {\it
standard Liouville vector field} $E=\sum_i p_i
{\partial}/{\partial p_i}$ on $T^*Q$.

Since $E(H)=\langle p, p-\theta \rangle$, a regular hypersurface
$M_h=H^{-1}(h)$ is of contact type with respect to $pdq$ within a
region
\begin{eqnarray*}
M_{0,h}&=&\left\{\langle p-\theta,p-\theta \rangle +2V(q)=2h,
\,\langle p, p-\theta \rangle \ne 0 \right\}\\
&=&\left\{\langle p-\theta,p-\theta \rangle +2V(q)=2h, \,\langle
p,p\rangle \ne 2V+\langle\theta,\theta\rangle-2h\right\}
 \subset T^* Q_h.
\end{eqnarray*}

Note that the equation $\langle p,p-\theta\rangle=0\vert_q$,
$\theta_q \ne 0$,  defines an ellipsoid in $T^*_q Q$. Assume
$$
h_*=\max_{q\in Q} \left(V(q)+\frac12 \langle
\theta,\theta\rangle\right) < \infty
$$
(for example, $h_*$ exists if $Q$ is compact). Then
 regular hypersurfaces
$M=H^{-1}(h)$, for $h>h_*$, are of contact type with respect to
$pdq$. The function \eqref{Jacobi-transf} has the form
$$
H_{J}(q,p)={\langle p-\theta,p\rangle }/({4(h-V(q))+2 \langle
\theta,p\rangle}).
$$

In particular, if $\theta \equiv 0$, $H_J$ is the Hamiltonian
function of the geodesic flow of Jacobi's metric \eqref{Jacobi}
and $M_h$ is the corresponding co-sphere bundle over $Q$.

\begin{rem}
The function $\mathcal N$ in the time reparametrization
\eqref{TR1} equals $\mathcal N=1/E(H)$, $E(H)=\langle p, p-\theta
\rangle=2(h-V(q))\vert_M$. That is,
$$
dt=d\tau/2(h-V),
$$
which agrees with Corollary 2.2 (where the time parameter $dt$ of
the original system is denoted by $d\tau$,
$d\tau=ds/\sqrt{2(h-V)}=ds_J/2(h-V)$, and $ds_J$ is the natural
parameter of Jacobi'metric).
\end{rem}

\section{Examples: contact flows and integrable systems}

\subsection{Harmonic oscillators}

Consider the simplest integrable system - the system of $n$
independent harmonic oscillators defined by the Hamiltonian
function
\begin{equation}\label{ham}
H=\sum_i F_i, \qquad F_i=\frac12({a_i} q_i^2+b_i p_i^2), \qquad
1,\dots,n,
\end{equation}
in the standard symplectic linear space $\R^{2n}(q,p)$. Here we
suppose that the products $a_ib_i$, $i=1,\dots,n$ are positive.

By the use of the first integrals $F_i=c_i$, a generic solution of
the equations
\begin{equation}
\dot q_i =b_i{p_i}, \qquad \dot p_i=-a_i q_i, \qquad i=1,\dots,n
\label{h1}
\end{equation}
can be written in the form  $
q_i(t)=\sqrt{\frac{2c_i}{a_i}}\cos\left(\omega_i
t+\varphi_i^0\right)$, $ p_i(t)=-\sqrt{\frac{2c_i}{b_i}}
\sin\left(\omega_i t+\varphi_i^0\right)$,
$\omega_i=\sqrt{a_ib_i}$, where $\varphi^0_i\in [0,2\pi)$ are
determined from the initial conditions. Assume
\begin{eqnarray*}
&&
A_k=a_{r_1+\dots+r_{k-1}+1}=\dots=a_{r_1+\dots+r_{k}},\\
&&B_k=b_{r_1+\dots+r_{k-1}+1}=\dots=b_{r_1+\dots+r_{k}},\\
&& 1\le k \le s, \quad r_1+\dots+r_s=n, \quad r_0=0
\end{eqnarray*}
and that the frequencies
$\sqrt{A_1B_1},\sqrt{A_2B_2},\dots,\sqrt{A_sB_s}$ are independent
over $\mathbb Q$.

Due to the $U(r_1)\times \dots \times U(r_s)$-symmetry, the system
\eqref{h1} has additional Noether integrals
\begin{eqnarray*}\label{integrals}
&&F^k_{ij}=A_k q_iq_j+B_k p_ip_j, \qquad
G^k_{ij}=q_jp_i-p_jq_i, \\
&& {r_1+\dots+r_{k-1}+1} \le i<j\le r_1+\dots+r_{k}, \qquad
 k=1,\dots,s,
\end{eqnarray*}
implying the non-commutative integrability of the system \cite{N,
MF}. Generic trajectories fill up densely invariant
$s$-dimensional invariant isotropic tori generated by the
Hamiltonian vector fields of integrals
$$
H_1=F_1+\dots+F_{r_1}\, , \quad \dots\, ,\quad H_s=
F_{r_1+\dots+r_{s-1}+1}+\dots+F_{r_1+\dots+r_{s}}\,.
$$

The quadric $M_h=H^{-1}(h)$, $h\ne 0$ is of contact type with
respect to the canonical 1-form $p\,dq$ outside $p=0$, where we
have a well defined Jacobi's metric.

However, if instead of $p\,dq$, we take
\begin{equation}\label{pert}
\alpha=\sum_{i=1}^n p_idq_i-\frac12d(\sum_{i=1}^n p_i
q_i)=\frac12\sum_i p_i dq_i-q_i dp_i,
\end{equation}
then $d\alpha=d(p\,dq)=dp\wedge dq$ and the only zero of $\alpha$
is at the origin $0$. The corresponding Liouville vector field is
$$
E=\frac12\sum_i q_i\frac{\partial}{\partial q_i}+p_i
\frac{\partial}{\partial p_i}. 
$$

Since $E(H)=h\vert_{M_h}$, the  quadric $M_h$ is of contact type
with respect to $\alpha$ and the Reeb flow on $M_h$ is
$Z=h^{-1}X_H\vert_{M_h}$.

The above construction  provides natural examples of contact
structures on quadrics within $\R^{2n}$ having the integrable Reeb
flows with $s$-dimensional invariant tori, for any $s=1,\dots,n$.
The case $s=n$ corresponds to contact commutative integrability
introduced by Banyaga and Molino \cite{BM} (see also \cite{KT,
Bo}), while for $s<n$ we have contact noncommutative integrability
recently proposed in \cite{Jo}.

By taking all parameters to be positive ($a_i,b_i>0$,
$i=1,\dots,n$), after rescaling of $M_h$ to a sphere $S^{2n-1}$,
we get $K$-contact structures on a sphere $S^{2n-1}$ given by
Yamazaki (see Example 2.3 in \cite{Y}). In particular, for
$a_1=a_2=\dots=a_n=b_n=1$ we have the {\it standard contact
structure} on a sphere $S^{n-1}=H^{-1}(1/2)$ with the Reeb flow
which defines the Hopf fibration (e.g., see \cite{LM}).

\begin{rem}
A modification of the canonical form $p\,dq$ given by \eqref{pert}
can be applied for starshaped hypersurfaces in $\R^{2n}$. More
generally, consider a regular isoenergetic hypersurface
$M_h=H^{-1}$ in $(T^*Q(q,p),dp\wedge dq)$. It is of contact type
if there exist a closed 1-form $\varphi$ on $M_h$ such that
$p\,dq(X_H\vert_{M_h})+\varphi(X_H\vert_{M_h})\ne 0$. Assume $M_h$
is compact. Then the required 1-form $\varphi$ exists if and only
if $\int_{M_h} p\,dq(X_H) d\mu\ne 0$ for every invariant
probability measure $\mu$ with zero homology (see Apendix B in
\cite{CMP}). In particular, for a compact regular energy surface
$M_h=H^{-1}(h)$ in the standard symplectic linear space
$(R^{2n}(q,p),dq\wedge dq)$ we have the following sufficient
conditions. Suppose:

(i)  $\,\quad$ $p\,dq(X_H)>0$, for $p\ne 0$, $(q,p)\in M$;

(ii) $\quad$ if $M\cap \{p=0\}\ne \emptyset$, then $\frac{\partial
}{\partial q} H(q,0)\ne 0$ at the points $(q,0)\in M$.

Then $M_h$ is of contact type with respect to
$$
\alpha=\sum_{i=1}^n p_idq_i-\epsilon d\left(\sum_{i=1}^n
p_i\frac{\partial}{\partial q_i} H(q,0)\right),
$$
for a certain parameter $\epsilon$  (see \cite{HZ}).
\end{rem}

\subsection{The regularization of Kepler's problem}
The motion of a particle in the central potential filed is
described by the Hamiltonian function
$$
H: \R^{2n}_*=\R^{2n}\setminus \{q=0\}\to \R, \qquad H(q,p)=\frac{
\vert p\vert^2}{2}-\frac{\gamma}{{\vert q\vert}},
$$
where $\langle\cdot,\cdot\rangle$ is the Euclidean scalar product
in $\R^n$. Moser's regularization of Kepler's problem (see
\cite{Mo}) can be interpreted in contact terms as follows.

Let $M_h=\{H=h\}\subset \R^{2n}_*$ be an isoenergetic
hypersurface. Let us interchange the roll of $q$ and $p$ and
consider the form $ \alpha =-\sum_{i=1}^n q_idp_i $ and the
associated Liouville vector field
$$
E=\sum_{i=1}^n q_i\frac{\partial}{\partial q_i}.
$$

Since $ E(H)={\gamma}/{\vert q\vert}$, $M_h$ is of contact type
with respect to $\alpha$. According to Lemma \ref{jakobijeva}, the
Reeb flow on $M_h$ can be seen as a Hamiltonian flow of
$$
H_0=(\vert p\vert^2-2h)\vert q\vert/2\gamma-1.
$$

In order to get a smooth Hamiltonian we can take $F=(H_0+1)^2/2$
(see Lemma \ref{jakobijeva}):
$$
F(q,p)=\frac{(\vert p\vert ^2-2h)^2}{8\gamma^2}\vert q\vert^2.
$$

Then $F\vert_{M_h}=\frac12$, $Z=X_F\vert_{M_h}$ and, moreover,
$X_F$ is defined on the whole $\R^{2n}$.

Assume $h<0$. The Hamiltonian $F(q,p)$ can be interpreted as a
geodesic flow of the metric proportional to
$$
ds^2_h=\frac{dp_1^2+\dots+dp_n^2}{(2h-\vert p \vert^2)^2}.
$$
It represents the round sphere metric obtained by a stereographic
projection (see Moser \cite{Mo}). Thus, for $h<0$, there exist a
compact contact manifold $\bar M_h=M_h \cup S^n$ (a co-sphere
bundle over $S^{n}$) with a Reeb vector field $\bar Z$, which is a
smooth extension of $Z$. In particular, for $n=2$, $\bar M_h \cong
\mathbb{RP}^3$. On $\mathbb{RP}^3$ we have a {\it standard contact
structure}, obtained from the standard contact structure on $S^3$
via antipodal mapping.

Note that for $h>0$, the metric $ds^2_h$ is defined within the
ball of radius $\sqrt{2h}$ and represents Poincar\'e's model of
the Lobachevsky space.

The contact regularization of the restricted 3-body problem is
given in \cite{AFKP}.

\subsection{The Maupertuis
principle and geodesic flows on a sphere} It is well known that
the standard metric on a rotational surface and on an ellipsoid
have the geodesic flows integrable by means of an integral
polynomial in momenta of the first (Clairaut) and the second
degree (Jacobi) \cite{Ar}. A natural question is the existence of
a metric on a sphere $S^2$ with polynomial integral which can not
be reduced to linear or quadratic one. The first examples are
given in \cite{BKF}. Namely, the motion of a rigid body about a
fixed point in the presence of the gravitation field admits
$SO(2)$--reduction (rotations about the direction of gravitational
field) to a natural mechanical system on $S^2$. Starting from the
integrable Kovalevskaya and Goryachev--Chaplygin cases and taking
the corresponding Jacobi's metrics, we get
 the  metrics with additional integrals
of 4-th and 3-th degrees, respectively.

We proceed with a celebrated Neumann system. The Neumann system
describes the motion of a particle on a sphere $\langle
q,q\rangle=1$ with respect to the quadratic potential
$V(q)=\frac12\langle Aq,q\rangle$, $A=\diag(a_1,\dots,a_n)$ (we
assume that $A$ is positive definite). The Hamiltonian of the
system is:
\begin{equation}\label{NK}
H_{N}(q,p)=\frac12\langle p,p\rangle+\frac12\langle Ax,x\rangle\,.
\end{equation}

Here, the cotangent bundle of a sphere $T^*S^{n-1}$ is realized as
a submanifold $P$ of $\R^{2n}$ given by the constraints
\begin{equation}
F_1\equiv \langle q,q\rangle =1, \quad F_2\equiv \langle
q,p\rangle =0. \label{psi}
\end{equation}

The canonical symplectic form on $P\cong T^*S^{n-1}$ is a
restriction of the standard symplectic form $dp \wedge dq$ to $P$.
Let $H:\R^{2n}\to \R$. The Hamiltonian vector field $X_{H}\vert_P$
reads
$$
X_{H}(q,p)\vert_P=X_H(q,p)-\lambda_1 X_{F_1}(q,p)-\lambda_2
X_{F_2} (q,p), \qquad (q,p)\in P,
$$
where the Lagrange multipliers are determined from the condition
that $X_{H}\vert_P$ is tangent to $P$ (e.g., see \cite{Moser}).

There is a well known {Kn\"orrer}'s correspondence between the
trajectories $q(t)$ of the Neumann system \eqref{NK} restricted to
the zero level set of the integral
\begin{equation}\label{MN}
H(q,p)=\frac12\left(\langle A^{-1} q,q\rangle \langle A^{-1}
p,p\rangle-\langle A^{-1}q,p\rangle^2-\langle A^{-1} q,q\rangle
\right).
\end{equation}
and the geodesic lines on an ellipsoid $
E_1^{n-1}=\{x\in\R^n\,\vert\,\langle x, Ax\rangle =1\} $ by the
use of a time reparametrization and the Gauss mapping $q=Ax/\vert
Ax\vert$ \cite{Knorr1}.

Recently, by using optimal control techniques, Jurdjevic obtain a
similar statement for the flow of the system defined by the
Hamiltonian \eqref{MN} \cite{Ju}.

We give the interpretation of Jurdjevic's time change by the use
of Maupertuis principle. Since the potential $V(q)=-\frac12\langle
A^{-1} q,q\rangle$ is negative, the isoenergetic surface
\begin{equation}\label{MNo}
M_0=\{H\vert_P=0\}\subset P \cong T^*S^{n-1}
\end{equation}
is of contact type with respect to $p\,dq\vert_P$. The Reeb vector
field $Z$ equals to the Hamilonian vector field of
\begin{equation}\label{MNJ}
H_J=\frac1{4\langle A^{-1}q,q\rangle}\left(\langle A^{-1}
q,q\rangle \langle A^{-1} p,p\rangle-\langle
A^{-1}q,p\rangle^2\right)\vert_P
\end{equation}
(the Hamiltonian of the corresponding Jacobi's metric).

The Legendre transformation of a function of the form \eqref{MNJ}
in the presence of constraints \eqref{psi} is given in \cite{FJ}
(see Theorem 2 \cite{FJ} and interchange the role of the tangent
and cotangent bundles of a sphere). As a result, we obtain the
Lagrangian function
$$
L(q,\dot q)=\frac12\langle A\dot q,\dot q\rangle\vert_{S^{n-1}}\,.
$$

Remarkably, after the linear coordinate transformation
$x=\sqrt{A}q$, $L(q,\dot q)$ becomes the Lagrangian $L(x,\dot
x)=\frac12 \langle \dot x,\dot x\rangle$ of the standard metric on
the ellipsoid
$$
E_2^{n-1}=\{x\in\R^n\,\vert\,\langle A^{-1} x,x\rangle=1\}.
$$

Recall that the Reeb flow on $M_0$ can be seen as a time
reparametrization of the original Hamiltonian flow (see Remark
4.1). We can summarize the consideration above in the following
statement.

\begin{prop}[\cite{Ju}]
Under the time substitution
$$
dt=d\tau/2\langle A^{-1}q,q\rangle
$$
and the linear transformation $x=\sqrt{A}q$, the $q$-components of
the trajectories of the system defined by the Hamiltonian function
\eqref{MN} that lye on the zero energy level \eqref{MNo}, become
geodesic lines of the standard metric on the ellipsoid
$E_2^{n-1}$.
\end{prop}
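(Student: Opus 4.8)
The plan is to realize the Proposition as a chain of identifications built from the contact-geometric results of Section 4, rather than as a single computation. The guiding idea is that on the zero level $M_0$ of \eqref{MNo} the Reeb flow of $\alpha=p\,dq|_P$ admits two descriptions: by the time reparametrization \eqref{TR1} it is the Neumann--Kn\"orrer flow of \eqref{MN} up to a rescaling of time, and by Lemma \ref{jakobijeva} it is the Hamiltonian flow of the Jacobi Hamiltonian $H_J$ of \eqref{MNJ}. Legendre-transforming $H_J$ and applying the linear change $x=\sqrt A q$ then exhibits the Reeb orbits as geodesics on $E_2^{n-1}$, and the first description transports this back to the trajectories of \eqref{MN}.

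First I would verify that $M_0$ is of contact type with respect to $\alpha=p\,dq|_P$. On $M_0$ the relation $H=0$ forces the fibre-quadratic (kinetic) part of \eqref{MN} to equal $-V=\tfrac12\langle A^{-1}q,q\rangle$, which is strictly positive because $A$ is positive definite and $q\ne 0$ on the sphere; hence, by \eqref{pomoc}, $\alpha(X_H)|_{M_0}=E(H)|_{M_0}=2(-V)|_{M_0}>0$, so the Liouville field is transverse to $M_0$ and the contact-type criterion of Section 4 applies. Lemma \ref{jakobijeva} then gives $Z=X_{H_J}|_{M_0}$ with $H_J$ normalized by $H_J|_{M_0}=\tfrac12$, while \eqref{TR1} identifies the Reeb flow with the flow of $X_H$ under $dt=\mathcal N\,d\tau$, $\mathcal N=1/E(H)$. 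Evaluating $\mathcal N$ on $M_0$ produces the time substitution $dt=d\tau/2\langle A^{-1}q,q\rangle$ of the statement (up to the normalization fixing the Reeb Hamiltonian $H_J$); in particular the $q$-components of the $X_H$-trajectories and of the Reeb orbits are the same configuration-space curves up to this reparametrization.

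It then remains to see that the Reeb orbits project to geodesics. I would Legendre-transform $H_J$ in the presence of the constraints \eqref{psi}, invoking Theorem~2 of \cite{FJ} with the roles of $TS^{n-1}$ and $T^*S^{n-1}$ interchanged, to obtain $L(q,\dot q)=\tfrac12\langle A\dot q,\dot q\rangle|_{S^{n-1}}$. Under $x=\sqrt A q$ one has $\langle A\dot q,\dot q\rangle=\langle\dot x,\dot x\rangle$, and the constraint $\langle q,q\rangle=1$ becomes $\langle A^{-1}x,x\rangle=1$, so $L$ turns into the free kinetic energy $\tfrac12\langle\dot x,\dot x\rangle$ restricted to $E_2^{n-1}$, whose Euler--Lagrange equations are precisely the geodesic equations of the induced Euclidean metric (equivalently, Theorem \ref{JM} with $V\equiv 0$ and $\theta\equiv 0$). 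Reading this through the reparametrization of the previous paragraph yields the Proposition.

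The delicate step is the constrained Legendre transformation of $H_J$: since $H_J$ is a genuinely $q$-dependent quadratic form in the momenta defined on the surface \eqref{psi}, one has to add the multipliers enforcing $F_1=F_2=0$, eliminate them, and verify that the resulting fibre metric inverts to exactly $\langle A\,\cdot\,,\,\cdot\,\rangle$ on $S^{n-1}$; this is the computation supplied by \cite{FJ}. By contrast, the contact-type verification, the passage $Z=X_{H_J}|_{M_0}$, and the two time rescalings \eqref{TR1}--\eqref{TR2} are routine once Lemma \ref{jakobijeva} is in hand, so the only real content beyond \cite{FJ} is the bookkeeping of the constant in $\mathcal N$ that pins down the time change.
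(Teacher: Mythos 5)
Your proposal is correct and follows essentially the same route as the paper: contact type of $M_0$ from the negativity of the potential, Lemma \ref{jakobijeva} to identify the Reeb flow with the Hamiltonian flow of \eqref{MNJ}, the constrained Legendre transformation of Theorem 2 in \cite{FJ} yielding $L(q,\dot q)=\frac12\langle A\dot q,\dot q\rangle\vert_{S^{n-1}}$, the substitution $x=\sqrt{A}\,q$, and the reparametrization \eqref{TR1} (Remark 4.1). The one bookkeeping point, which you rightly flag: since $E(H)=2T=\langle A^{-1}q,q\rangle$ on $M_0$, the strict Reeb normalization $\mathcal N=1/E(H)$ gives $dt=d\tau/\langle A^{-1}q,q\rangle$, and the factor $2$ in the stated substitution comes from the normalization of \eqref{MNJ} (for which $X_{H_J}\vert_{M_0}$ is half the Reeb field) --- a constant time rescaling that does not affect the geodesic property.
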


Further interesting examples of transformations related to the
Maupertuis Principle, which map a given integrable system into
another one are given in \cite{Ts}.

\subsection*{Acknowledgments}
I am grateful to Professor Jurdjevic for providing a preprint of
the paper \cite{Ju} and to the referee for valuable remarks. This
research was supported by the Serbian Ministry of Science Project
174020, Geometry and Topology of Manifolds, Classical Mechanics
and Integrable Dynamical Systems.

\end{document}